\pgfplotsset{compat=1.16}
\newtheorem{lem}{Lemma}[section]
\newtheorem{fact}{Fact}[section]
\newtheorem{alg}{Algorithm}[section]
\renewcommand{\vec}{\bm}
\newcommand{\BC}{\mathbb{C}}
\newcommand{\CO}{\mathcal{O}}
\newcommand{\BR}{\mathbb{R}}
\newcommand{\vA}{\bm{A}}
\newcommand{\vB}{\bm{B}}
\newcommand{\vh}{\bm{h}}
\newcommand{\vH}{\bm{H}}
\newcommand{\vI}{\bm{I}}
\newcommand{\vV}{\bm{V}}
\newcommand{\vX}{\bm{X}}
\newcommand{\vY }{\bm{Y }}
\newcommand{\vZ}{\bm{Z}}
\renewcommand{\L}{\left}
\newcommand{\R}{\right}
\newcommand{\vertiii}[1]{{\left\vert\kern-0.25ex\left\vert\kern-0.25ex\left\vert #1 \right\vert\kern-0.25ex\right\vert\kern-0.25ex\right\vert}}
\newcommand{\norm}[1]{\Vert {#1} \Vert}
\newcommand{\labs}[1]{\left\vert {#1} \right\vert}
\newcommand{\lnorm}[1]{\left\Vert {#1} \right\Vert}
\newcommand{\e}{\mathrm{e}}
\newcommand{\ri}{\mathrm{i}}
\newcommand{\rd}{\mathrm{d}}
\newcommand*{\poly}{\mathrm{Poly}}
\DeclarePairedDelimiterX{\braket}[1]{\langle}{\rangle}{#1}
\DeclarePairedDelimiterX\ketbra[2]{| }{|}{#1 \delimsize\rangle\!\delimsize\langle #2}	
\DeclarePairedDelimiterX\dotp[2]{\langle}{\rangle}{#1, #2}
\DeclareMathAlphabet{\dutchcal}{U}{dutchcal}{m}{n}
\SetMathAlphabet{\dutchcal}{bold}{U}{dutchcal}{b}{n}
\DeclareMathAlphabet{\dutchbcal} {U}{dutchcal}{b}{n}
\DeclareRobustCommand*{\pmzerodot}{%
	\nfss@text{%
		\sbox0{$\vcenter{}$}
		\sbox2{0}%
		\sbox4{0\/}%
		\ooalign{%
			0\cr
			\hidewidth
			\kern\dimexpr\wd4-\wd2\relax 
			\raise\dimexpr(\ht2-\dp2)/2-\ht0\relax\hbox{%
				\if b\expandafter\@car\f@series\@nil\relax
				\mathversion{bold}%
				\fi
				$\cdot\m@th$%
			}%
			\hidewidth
			\cr
			\vphantom{0}
		}%
	}%
}
	\newcommand{\authnote}[3]{{\color{#3} {\bf  #1:} #2}}	
	\newcommand{\authnote}[3]{}
\begin{document}

\title{Catalytic Tomography of Ground States}

\author{Chi-Fang Chen}
\email{achifchen@gmail.com}
\affiliation{University of California, Berkeley, CA, USA}
\author{Robbie King}
\email{robbieking@google.com}
\affiliation{Google Quantum AI, Santa Barbara, CA, USA}
\affiliation{University of California, Berkeley, CA, USA}

\begin{abstract}
We introduce a simple protocol for measuring properties of a gapped ground state with essentially no disturbance to the state. The required Hamiltonian evolution time scales inversely with the spectral gap and target precision (up to logarithmic factors), which is optimal. For local observables on geometrically local systems, the protocol only requires Hamiltonian evolution on a quasi-local patch of inverse-gap radius. Our results show that gapped ground states are algorithmically readable from a single copy without a recovery or rewinding procedure, which may drastically reduce tomography overhead in certain quantum simulation tasks.
\end{abstract}

\maketitle

\section{Introduction}

A notorious feature of quantum mechanics is that measurements are generally destructive. The Born rule asserts that measurement collapses the quantum state into the eigenbasis of the observable; therefore, learning the expectation of an observable generally requires multiple copies of identical, unknown quantum states. Together with the uncertainty principle and the no-cloning theorem, these spooky properties of quantum mechanics render tomography of quantum states a rich and active subject, e.g., in full state tomography~\cite{haah2016sample,o2016efficient}, achieving high precision~\cite{knill2007optimal,huggins2022nearly}, measuring many observables (`shadow tomography')~\cite{aaronson2018shadow,huang2020predicting,huang2021information}, and measuring fermionic observables~\cite{bonet2020nearly,zhao2021fermionic,wan2023matchgate,king2025triply}.

In the quantum simulation context, the tomography problem naturally appears as an indispensable subroutine: given the ground state of a many-body Hamiltonian, we ultimately care about its measurable, physical properties. However, generic measurement schemes require the number of copies to grow polynomially with the precision. Therefore, in search for quantum speedup in many-body simulation, the \textit{multiplicative} nature of tomography costs has gradually emerged as a significant burden, especially when preparing each copy of the ground state is already resource-demanding~\cite{huggins2021efficient}. By contrast, classical methods such as tensor networks~\cite{schollwock2011density,orus2019tensor,verstraete2023density}, density functional theory~\cite{gross1995density,parr1995density}, and many other techniques in quantum chemistry~\cite{szabo1996modern,helgaker2000molecular} are often free of tomography costs, as an entire description of an ansatz state is stored in classical memory, and observables of interest are readily available.

In this work, we give a simple, general tomography protocol for gapped ground states with the following simultaneous desirable properties:
\begin{itemize}
    \item (Catalytic.) The original ground state remains essentially undisturbed throughout the measurement.
    \item (Efficient.) Each measurement uses, up to logarithmic factors, a Hamiltonian evolution time scaling inversely with the spectral gap and precision, which is optimal.
    \item (Quasi-local.) For local observables in geometrically local Hamiltonians, measurements are quasi-local and require evolving the Hamiltonian only on a patch of radius inversely proportional to the spectral gap. 
\end{itemize}
Ref.~\cite{farhi2010quantum} showed that tomography of gapped ground states requires only a single copy, by restoring (by a recovery map) or rewinding (by Mariott-Watrous \cite{marriott2005quantum} or QMA-amplification \cite{nagaj2009fast}) the state after measurement. Our results offer a conceptually more direct approach: our protocol ensures that the ground state remains undisturbed throughout the entire measurement, as in running quantum phase estimation for eigenstates. This is done locally by `filtering out' the destructive part of the observable, a technique that is used to study properties of ground states and to design dissipative quantum algorithms. 

The existence of a catalytic tomography protocol shows that gapped ground states do not always behave like Schrodinger's cat, but can be algorithmically \emph{readable} in a similar way to a classical memory. Observables can be measured efficiently and locally without effectively any disturbance to the state, which may drastically reduce tomography overhead in practice.

\section{Setup}
\label{sec:setup}
Consider an $n$-qubit Hamiltonian $\vH$ with ground state $\ket{\psi_0}$ and a Hermitian observable $\vA$ such that the operator norm is bounded by $\norm{\vA}\le1$. The goal, then, is to estimate the ground state expectation $\bra{\psi_0} \vA \ket{\psi_0} \in [-1,1]$ to guaranteed precision $\epsilon$, on a digital, fault-tolerant quantum computer. For our main result, which holds in the fully general case, we make three assumptions:

\begin{itemize}
    \item (Gapped, unique ground state.) A key assumption in our catalytic protocol is that the Hamiltonian has a unique ground state with a spectral gap. Consider the abstract eigendecomposition of the Hamiltonian
    \begin{align}
    \vH = \sum_{i=0}^{2^n-1} E_i \ket{\psi_i}\bra{\psi_i}\quad \text{where}\quad E_0 \le E_1\cdots\le E_{2^n-1}.
    \end{align}
    We say a Hamiltonian $\vH$ is $\Delta$-gapped if it has a unique ground state $\ket{\psi_0}\bra{\psi_0}$ such that $E_1-E_0 \ge \Delta \ge 0$.     
    A priori, our protocol only requires knowledge of a lower bound for the spectral gap (which, in practice, could simply be a guess).
    \item (Hamiltonian evolution.) Our protocol uses block-box access to controlled Hamiltonian forwards and backwards evolution $e^{\pm i\vH t}$. In fact, we will only employ (controlled) Heisenberg dynamics $e^{i\vH t} \vA e^{-i\vH t}.$
    \item (Access to $\vA$.) We assume access to a block encoding of $\vA$; this is automatically guaranteed for any self-adjoint unitary circuit such as a Pauli string.
\end{itemize}

Our second result specializes in geometrically local Hamiltonians with spin or fermionic degrees of freedom. On a lattice $\Lambda = [L]^{D}$ of $n = \labs{\Lambda}$ sites, we consider Hamiltonians $\vH$ where each term $\vh_S$ acts on at most $q$-sites 
\begin{align}
    \vH = \sum_{S\subset \Lambda} \vh_S\quad \text{where}\quad \norm{\vh_S} \le 1\label{eq:localHam}
\end{align}
such that $[\vh_S,\vh_{S'}] = 0$ if $S\cap S' =\emptyset$ and $\vh_S = 0$ if $\text{diam}(S) \ge R$ (in Manhattan distance). Such individual terms can be Pauli strings or even-degree fermionic operators supported on $S$. We will assume the geometric parameters $R$, $q$, and $D$ to be constants, while the explicit dependence can be computed in principle. We also assume that the observable $\vA$ of interest is also supported on a subset of the lattice $A \subset \Lambda$. Generally, the argument holds whenever a quantitative Lieb-Robinson bound~\cite{chen2023speed} is available, such as with power-law~\cite{kuwahara2020strictly,chen2019finite} or bounded degree interactions~\cite{hastings2006spectral,nachtergaele2006lieb}.

\subsection*{Prior work}

\begin{table}[t]
\centering
\begin{tabular}{lccc}
\toprule
\textbf{Algorithm} & \textbf{Evolution time} & \textbf{Locality radius} & \textbf{Catalytic} \\
\midrule
Quantum Zeno effect~\cite{vaidman2014protective} & $\tilde{O}(1 / (\Delta \epsilon^2 \delta))$* & Entire system & No \\
Adiabatic measurement~\cite{aharonov1993measurement} & $\tilde{O}(1 / (\Delta \epsilon \sqrt{\delta}))$* & Entire system & No \\
State restoration~\cite{farhi2010quantum} & $\tilde{O}(1 / (\Delta \epsilon^2))$ & Entire system & No \\
Marriott-Watrous~\cite{marriott2005quantum,farhi2010quantum} & $\tilde{O}(1 / (\Delta \epsilon^2))$ & Entire system & No \\
Fast QMA amplification~\cite{nagaj2009fast,farhi2010quantum} & $\tilde{O}(1 / (\Delta \epsilon))$ & Entire system & No \\
Catalytic ground state tomography (\autoref{alg:main}) & $\tilde{O}(1 / (\Delta \epsilon))$ & Entire system & Yes \\
Local catalytic ground state tomography (\autoref{alg:local}) & $\tilde{O}(1 / (\Delta \epsilon))$ & $\tilde{O}(1 / \Delta)$ & Yes \\
Lower bounds (\autoref{sec:lower}) & $\Omega(1 / (\Delta \epsilon))$ & $\Omega(1 / \Delta)$ & \\
\bottomrule
\end{tabular}
    \caption{Comparison of techniques for measuring the expectation value of an observable on a gapped ground state. The desired precision of the estimate is $\epsilon$; the spectral gap is $\Delta$; and the disturbance incurred to the state is $\delta$. The $\tilde{O}$ ignores logarithmic factors such as $\log(1/\delta)$ and $\log(1/\epsilon)$. Catalytic tomography is the only known protocol that is both quasi-local for local observables and has Heisenberg-limited scaling in $\epsilon$. *These evolution times are our completion of their arguments.}
\label{tab:prior}
\end{table}

We know that phase estimation is catalytic for an eigenstate of the observable: we can estimate the observable to high precision without disturbing the eigenstate. Therefore, if the goal is only to catalytically estimate the energy of a gapped ground state, phase estimation suffices.

The possibility of tomography from a single copy for ground states and metastable states was first explored in Refs.~\cite{aharonov1993measurement,aharonov1996adiabatic,aharonov1993meaning,vaidman2014protective} under the name `protective measurement'. One protocol involves coupling the system to a probe with a weak adiabatic interaction \cite{aharonov1993measurement,aharonov1996adiabatic}.\footnote{We thank Hsin-Yuan Huang and Daniel Ranard for inspiring discussions on the adiabatic approach.} Another exploits the quantum Zeno effect \cite{vaidman2014protective}, reminiscent of the Elitzur-Vaidman bomb tester \cite{elitzur1993quantum}. In order for these protocols to ensure the state is disturbed by at most $\delta$, the required Hamiltonian evolution time scales inverse-polynomially in $\delta$, whereas a truly catalytic protocol should have scaling polynomial in $\log(1/\delta)$.

Ref.~\cite{farhi2010quantum} achieves similar results to us by efficiently restoring the state after measurement using the Hamiltonian. In addition to their own state restoration algorithm, they achieve state restoration using the QMA amplification schemes of Ref.~\cite{marriott2005quantum} and Ref.~\cite{nagaj2009fast}, which were designed to amplify the acceptance probability of a QMA verification protocol using only a single copy of the witness state. Similar to our results, these state restoration protocols achieve the optimal Hamiltonian evolution time $\tilde{\mathcal{O}}(1/(\Delta \epsilon))$, scaling only logarithmically in the desired disturbance $\delta$.

All methods mentioned above require acting on the entire system, even when measuring a local operator. It is worth comparing the above with the \textit{local Markov property} of thermal states: local noise channel can be recovered by the natural quantum Gibbs sampling dynamics~\cite{kato2025clustering, chen2025quantum, bakshi2024high} acting quasi-locally near the noise. Even though the recovery map here also enjoys quasi-locality, the runtime depends on a local mixing time. Most importantly, the possibility of noise-recovery for a mixed state--- even for classical Markov chains Monte Carlo methods --- does not imply tomography of observables; see discussion in~\cite{bergamaschi2025structural}. It remains open to bridge the Markov property of mixed states and nondestructive tomography of ground states.

\begin{figure}[t]
    \centering
    \includegraphics[width=0.65\linewidth]{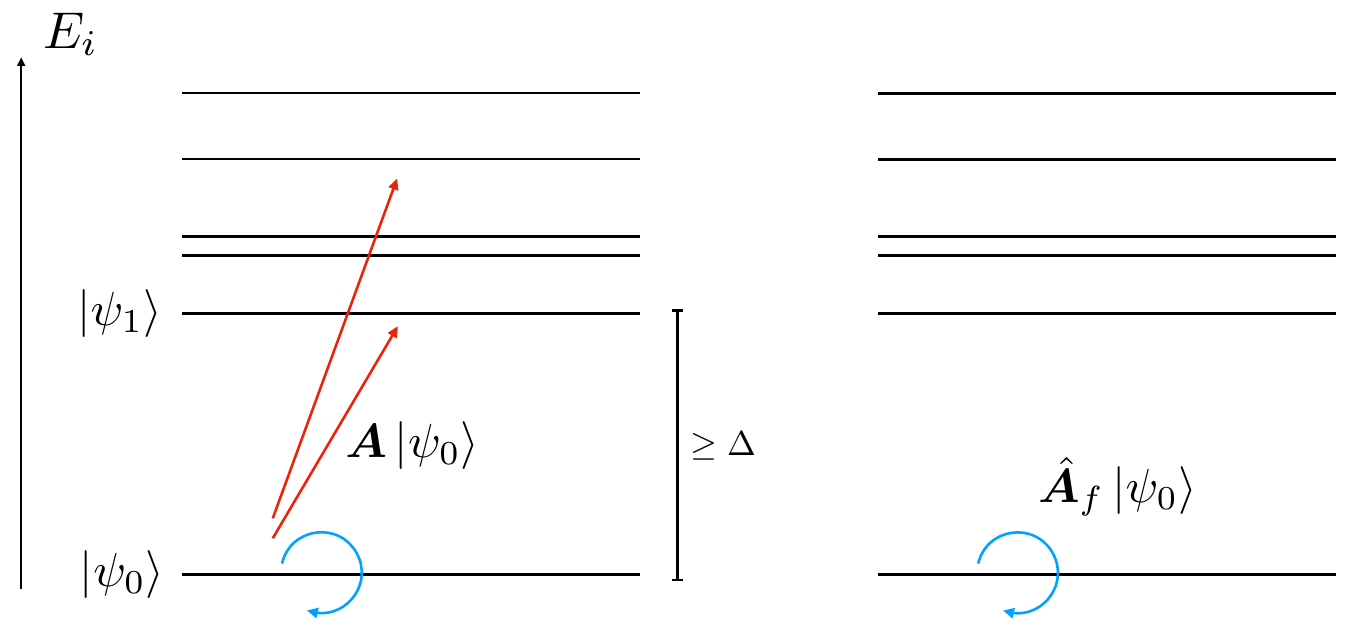}
    \captionsetup{justification=raggedright,singlelinecheck=false}
    \caption{ (Left) Acting operator $\vA$ on the ground state incurs transitions to excited states that may be difficult to undo. (Right) When we apply the filtered operator $\hat{\vA}_f,$ the transition amplitudes to the excited state are highly suppressed, and the state remains at the ground state. 
    }
    \label{fig:AAf}
\end{figure}

\section{Results}

Our catalytic tomography protocol offers a different picture to all of these previous ideas: rather than acting weakly and adiabatically, or restoring, we can algorithmically kill the destructive part of the measurement `before it even happens' (see \autoref{sec:ideas}). In addition, ours is the first protocol that becomes quasi-local for local observables on finite-dimensional lattices; see \autoref{tab:prior}.

\begin{alg}[Catalytic ground state tomography] \label{alg:main}
    Given one copy of the $\Delta-$gapped ground state $\ket{\psi_0}$ of a known Hamiltonian $\vH$ and a Hermitian observable $\norm{\vA}\le 1$, there is a measurement protocol that
    returns an $\epsilon$-approximate, $(1-\delta)$-confidence estimate of 
    \begin{align}
        \bra{\psi_0} \vA \ket{\psi_0} \quad &\text{using}\quad \CO\L(\frac{\log(1/\delta) (\log(1/\delta)+\log(1/\epsilon))}{\Delta\epsilon}\R)\quad \text{controlled Heisenberg evolution time for $\vH$} \\
        &\text{and}\quad \CO\L(\frac{\log(1/\delta)}{\epsilon}\R) \quad \text{calls to a block-encoding of $\vA$},     
    \end{align}
    and returns a state $\delta$-close to $\ket{\psi_0}$ in trace distance.
\end{alg}

See \autoref{app:implementation} for a full description and proof of \autoref{alg:main}. The total Hamiltonian evolution time comes from performing high-confidence phase estimation on an auxiliary operator $\hat{\vA}_f$, using $\CO(\log(1/\delta)/\epsilon)$ calls of $\hat{\vA}_f$, each well-approximated using Heisenberg evolution time $T = \CO( \log(1/\delta\epsilon)/\Delta).$ The scaling of $1/(\Delta\epsilon)$ is optimal in a black-box single qubit setting (\autoref{fact:lowerbound_ham}) and can be morally understood as a Heisenberg limit; we do not attempt to optimize the logarithmic factors.

\autoref{alg:main} works for any Hamiltonian $\vH$ and any observable $\vA$, given black-box access to $\vH$ and $\vA$. However, if the Hamiltonian is geometrically local and the observables are local, the same measurement protocol naturally inherits the locality from a standard Lieb-Robinson argument (e.g., \cite{hastings2006spectral, nachtergaele2006lieb,haah2020quantum,chen2023speed}), giving an improved readout protocol whose resources are independent of system size.

\begin{alg}[Local catalytic ground state tomography] \label{alg:local}
    In the setting of \autoref{alg:main}, if the Hamiltonian $\vH$ is further a local Hamiltonian~\eqref{eq:localHam} and $\vA$ is a local observable with $\norm{\vA} \leq 1$ supported on region $A\subset \Lambda$, then the Heisenberg evolution only requires Hamiltonian on a patch of radius 
    \begin{align}
        \CO\L(\frac{\log(1/\delta) + \log(1/\epsilon)}{\Delta}+ \log(\labs{A}/\delta) \R)
    \end{align}
    from region $A$. Here, $\CO(\cdot)$ suppresses geometric parameters of $\vH$ (\autoref{sec:setup}).
\end{alg}

See \autoref{app:locality} for a rigorous analysis of the locality and a proof of \autoref{alg:local}. Roughly, the truncation argument is a standard Lieb-Robinson bound, with radius scaling with the Heisenberg evolution time $T$ from \autoref{alg:main}, with some additive logarithmic correction scaling with the support size of $\vA$. Generally, the linear scaling of locality vs.~Heisenberg evolution time is optimal: free fermions have lightcone spreading at a system-size independent velocity; in terms of static correlation, the 1D TFIM exhibits an exponential correlation decay with correlation length proportional to the gap (\autoref{sec:locality_Heisenberg}). Of course, in practice, the minimal truncation radius and correlation length could depend on the particular system.

Ref.~\cite{huggins2022nearly} shows how to estimate $m$ observables to precision $\epsilon$ with $\tilde{\CO}(\sqrt{m}/\epsilon)$ queries to a state-preparation circuit using the gradient method of Ref.~\cite{gilyen2019optimizing}. In contrast, reading them one by one via catalytic ground-state tomography on a $\Delta$-gapped ground state requires Hamiltonian evolution time $\tilde{\CO}(m/(\Delta\epsilon))$. We anticipate that combining these ideas enables readout of $m$ observables on a $\Delta$-gapped ground state to precision $\epsilon$ with total evolution time $\tilde{\CO}(\sqrt{m}/(\Delta\epsilon))$.

\section{Key ideas} \label{sec:ideas}

The destructiveness of state tomography is rooted in the off-diagonal elements in the eigenbasis of the observable. When we apply the operator to a state of interest,
\begin{align}
    \vA \ket{\psi_0} = \bra{\psi_0} \vA \ket{\psi_0} \cdot \ket{\psi_0} + \ket{\psi^{\perp}},
\end{align}
the state may suffer from (unknown) leakage to the orthogonal component $\ket{\psi^{\perp}}$. Without prior knowledge of the eigenstates, it is not obvious how to algorithmically `back up' to the original ground state. Indeed, for general unknown quantum states (in fact, even for classical distributions), measuring an observable to precision $\epsilon$ requires $\sim 1/\epsilon^2$ samples (see e.g.,~\cite{scharnhorst2025optimal}). 

Our main idea allows us to perform nondestructive phase estimation of general observables on a gapped ground state, even when the observables do not commute with the Hamiltonian. Given access to the Hamiltonian $\vH$, we can efficiently create an auxiliary operator $\hat{\vA}_f$, such that
\begin{align}
   \hat{\vA}_f \ket{\psi_0}  \approx \bra{\psi_0} \vA \ket{\psi_0} \cdot  \ket{\psi_0}
\end{align}
with controllable leakage; that is, the $\hat{\vA}_f$ is approximately block-diagonal for the ground space. The construction is inspired by the filtering trick originated in the study of gapped ground states~\cite{hastings2006spectral,Hastings2007AnAL} and the design of dissipative ground and thermal state algorithms~\cite{chen2023quantum,ding2024single,chen2023efficient}.
\begin{align}
    \hat{\vA}_{f}&: = \frac{1}{\sqrt{2\pi}}\int_{-\infty}^{\infty} e^{i\vH t} \vA e^{-i\vH t} f(t) \rd t\\
     &=\sum_{i,j} \ket{\psi_i}\bra{\psi_i} \vA \ket{\psi_j}\bra{\psi_j} \hat{f}(-(E_i-E_j)).
\end{align}
Intuitively, the operator damped the transitions that increase the energy, `before they even happen'. The function is both localized in time and energy, requiring only $1/\Delta$ Heisenberg evolution time so that it can distinguish energy changes of order $\Delta$. This is an essential mechanism in the proof of correlation decay in gapped ground state~\cite{hastings2006spectral,Hastings2007AnAL}, and the construction of quasi-local, dissipative dynamics~\cite{chen2023quantum,ding2024single,chen2023efficient} for ground state or thermal states. A very natural choice would be a Gaussian with suitable variance, but other profiles are plausible. The circuit implementation for (a block-encoding of) $\hat{\vA}_f$ uses controlled Hamiltonian evolution and the preparation of a state with amplitudes given by $\sqrt{f}$ in a standard linear combination of unitaries (LCU) compilation \cite{childs2012hamiltonian}, see~\autoref{fig:A_LCU} and~\autoref{lem:implement_Af}. Performing (high-confidence) phase estimation of $\hat{\vA}_f$ on the ground state then gives the algorithm.

In the following sections, we first properly describe the filtered operators, with two plausible choices of filters. Then, we give lower bounds that essentially match the Hamiltonian evolution time and locality of our protocol.

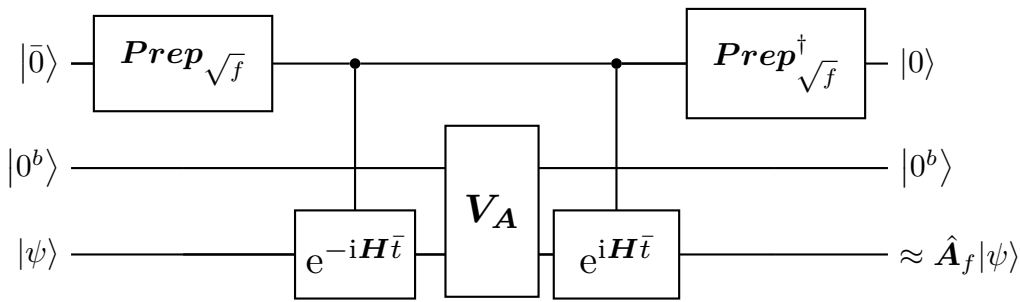
\begin{figure}[t]
\begin{center}
    \newcommand{\scalea}{1.2}
    \newcommand{\scaleb}{1.5}
    \begin{quantikz}[row sep=0.5cm] 
        \lstick{\scalebox{\scalea}{$\ket{\bar{0}}$}} & 
        \gate[style={inner xsep=2mm, inner ysep=2mm}]{\scalebox{\scalea}{$\vec{Prep}_{\sqrt{f}}$}} \qw & 
        \ctrl{2} & 
        \qw & 
        \ctrl{2} & 
        \gate[style={inner xsep=2mm, inner ysep=2mm}]{\scalebox{\scalea}{\textbf{$\vec{Prep}_{\sqrt{f}}^{\dagger}$}}}\qw & 
        \qw \rstick{\scalebox{\scalea}{$\ket{0}$}}\\
        \lstick{\scalebox{\scalea}{$\ket{0^b}$}} & 
        \qw & 
        \qw & 
        \gate[2,style={inner xsep=1mm, inner ysep=2mm}]{\scalebox{\scaleb}{$\vV_{\vA}$}} & 
        \qw & 
        \qw & 
        \rstick{\scalebox{\scalea}{$\ket{0^b}$}} \qw \\[-1mm]
        \lstick{\scalebox{\scalea}{$\ket{\psi}$}} & 
        \qw & 
        \gate[style={inner xsep=0mm, inner ysep=2mm}]{\scalebox{\scaleb}{$\e^{-\ri\vH \bar{t}}$}}\qw & 
        \qw & 
        \gate[style={inner xsep=2mm, inner ysep=2mm}]{\scalebox{\scaleb}{$\e^{\ri\vH \bar{t}}$}}\qw & 
        \qw & 
        \qw \rstick{\scalebox{\scalea}{$\approx \hat{\vA}_f\ket{\psi}$}}
     \end{quantikz}
\end{center}
\caption{Block-encoding for a Riemann sum-LCU implementation of $\hat{\vA}_f$, using state preparation circuit for $\sqrt{f}$ and a block-encoding $\vV_{\vA}$ for $\vA$; see also~the circuit for operator Fourier transform, which has an additional frequency register~\cite{chen2023quantum}. If $\vA$ is unitary, we can simply replace $\vV_{\vA}$ with $\vA$, implying $b=0$, i.e., the second registers can be omitted.}\label{fig:A_LCU}
\end{figure}

\section{Filtered operators}

To build towards the auxiliary operator $\hat{\vA}_f,$ we adapt from the notation of~\cite{chen2023quantum}. 
For any operator $\vA$ and Hamiltonian $\vH$, we can decompose according to the energy change, or  \textit{Bohr frequencies}, $\nu \in B(\vH) := \{ E_i - E_j : 0\le i,j\le 2^{n}-1\}$) by
\begin{align}
    \vA = \sum_{\nu} \vA_{\nu}\quad \text{where}\quad  \vA_{\nu} : = \sum_{E_i-E_j = \nu} \ket{\psi_i}\bra{\psi_i} \vA \ket{\psi_j}\bra{\psi_j}
\end{align}
with the symmetry property $(\vA_{\nu})^{\dagger} = (\vA^{\dagger})_{-\nu}.$
Therefore, the Heisenberg dynamics can be solved formally by
\begin{align}
\vA(t) := \e^{\ri \vH t} \vA \e^{-\ri \vH t} &= \sum_{\nu\in B} \vA_{\nu}\e^{\ri \nu t}.
\end{align}

For any integrable function $f(t):\BR\rightarrow \BC$ such that $\int \labs{f(t)} \rd t < \infty$, we can define the associated \textit{filtered operator} by an weighted integral over Heisenberg dynamics
\begin{align}
   \hat{\vA}_{f} &:= \frac{1}{\sqrt{2\pi}}\int_{-\infty}^{\infty} \vA(t) f(t) \rd t \\
   &= \sum_{\nu \in B(\vH)} \vA_{\nu} \hat{f}(-\nu) \ , \quad \text{where}\quad \hat{f}(\nu) := \frac{1}{\sqrt{2\pi}} \int_{-\infty}^{\infty} e^{-i \nu t} f(t) \rd \nu.
\end{align}
If $f(t)$ is real-valued, then $\hat{\vA}_f$ is Hermitian; the above is exactly the operator Fourier transform $\hat{\vA}_{f}(\omega) := \frac{1}{\sqrt{2\pi}}\int_{-\infty}^{\infty} e^{-i \omega t} \vA(t) f(t) \rd t$ evaluated at $\omega=0$ (e.g.,~\cite[Proposition A.2]{chen2023quantum}).

\subsubsection{Gaussian filters}

The Gaussian distribution is a natural candidate for filters for its sharp decay in both time and frequency. Consider the Gaussian weights defined by a width $\sigma >0.$
\begin{align}
\hat{f}(\nu) = e^{-\frac{\nu^2}{2\sigma^2}}\quad \text{and}\quad f(t) = \sigma e^{-\frac{\sigma^2t^2}{2}} \quad \text{such that}\quad \frac{1}{\sqrt{2\pi}}\int_{-\infty}^{\infty} f(t) \rd t = 1.\label{eq:gaussian}
\end{align}
\begin{lem}[Almost block-diagonal] Consider a Hermitian $\vA$ such that $\norm{\vA}\le 1.$ For the Gaussian function~\eqref{eq:gaussian}, the filtered operator $\hat{\vA}_f$ is almost block-diagonal in the energy basis
\begin{align}
    \lnorm{\hat{\vA}_f -\ket{\psi_0}\bra{\psi_0}\vA \ket{\psi_0}\bra{\psi_0} - \vA^{\perp}} \le e^{-\Delta^2/2\sigma^2},
\end{align}
for some operator $\vA^{\perp}$ such that $\vA^{\perp}\ket{\psi_0} = \bra{\psi_0}\vA^{\perp}= 0.$
\end{lem}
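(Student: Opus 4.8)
The plan is to use the explicit Bohr-frequency decomposition $\hat{\vA}_f = \sum_{\nu \in B(\vH)} \vA_\nu \hat{f}(-\nu)$ and split the sum according to whether a term involves the ground state or not. Concretely, I would write
\begin{align}
\hat{\vA}_f = \underbrace{\ket{\psi_0}\bra{\psi_0}\vA\ket{\psi_0}\bra{\psi_0}}_{\nu=0,\ \text{ground--ground}} + \vA^{\perp} + \vE,
\end{align}
where $\vA^{\perp}$ collects all terms $\ket{\psi_i}\bra{\psi_i}\vA\ket{\psi_j}\bra{\psi_j}\hat{f}(-(E_i-E_j))$ with $i,j \ge 1$ (which manifestly annihilates $\ket{\psi_0}$ from both sides), and $\vE$ is the remaining ``cross'' terms, i.e.\ those with exactly one of $i,j$ equal to $0$. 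Since the ground state is unique and $\Delta$-gapped, every cross term has $|E_i - E_j| \ge \Delta$, so each carries a factor $\hat{f}(-(E_i-E_j)) = e^{-(E_i-E_j)^2/2\sigma^2}$ of magnitude at most $e^{-\Delta^2/2\sigma^2}$. The whole content of the lemma is then the bound $\norm{\vE} \le e^{-\Delta^2/2\sigma^2}$.

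To bound $\norm{\vE}$, I would write $\vE = \vE_{\mathrm{up}} + \vE_{\mathrm{down}}$ where $\vE_{\mathrm{down}} = \ket{\psi_0}\bra{\psi_0}\vA\,\vP_{\ge 1}\,g(\vH)$ and $\vE_{\mathrm{up}} = g(\vH)\,\vP_{\ge 1}\,\vA\ket{\psi_0}\bra{\psi_0}$, with $\vP_{\ge 1} = \vI - \ket{\psi_0}\bra{\psi_0}$ the projector onto excited states and $g$ the operator obtained by functional calculus with $g(E) = \hat{f}(-(E-E_0)) = e^{-(E-E_0)^2/2\sigma^2}$ (being slightly careful to track which energy plays the role of $E_i$ versus $E_j$; the two cross blocks are adjoints of each other so it suffices to bound one). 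On the range of $\vP_{\ge 1}$ we have $E - E_0 \ge \Delta$, hence $\norm{\vP_{\ge 1}\,g(\vH)\,\vP_{\ge 1}} \le \sup_{x \ge \Delta} e^{-x^2/2\sigma^2} = e^{-\Delta^2/2\sigma^2}$. Combined with $\norm{\vA} \le 1$ and the fact that $\ket{\psi_0}\bra{\psi_0}$ and $\vP_{\ge 1}$ are contractions, each of $\vE_{\mathrm{up}}, \vE_{\mathrm{down}}$ has norm at most $e^{-\Delta^2/2\sigma^2}$. A naive triangle inequality gives $2e^{-\Delta^2/2\sigma^2}$, so the mild subtlety — the only place any thought is required — is shaving the factor of $2$: one observes that $\vE_{\mathrm{up}}$ and $\vE_{\mathrm{down}}$ have orthogonal supports (ranges in $\mathrm{span}\{\ket{\psi_0}\}$ and in $\mathrm{range}(\vP_{\ge 1})$ respectively) and orthogonal co-supports, so $\vE_{\mathrm{up}} + \vE_{\mathrm{down}}$ is (up to reordering basis vectors) block-diagonal with the two pieces in different blocks, whence $\norm{\vE} = \max(\norm{\vE_{\mathrm{up}}},\norm{\vE_{\mathrm{down}}}) \le e^{-\Delta^2/2\sigma^2}$.

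I expect the only real obstacle to be bookkeeping: being careful that ``block-diagonal'' is meant with respect to the ground/excited splitting (not the full energy eigenbasis), that $\vA^{\perp}$ is allowed to have arbitrary norm and only needs $\vA^{\perp}\ket{\psi_0} = \bra{\psi_0}\vA^{\perp} = 0$, and that the cross terms assemble cleanly into the two operators $\vE_{\mathrm{up}}, \vE_{\mathrm{down}}$ via functional calculus on $\vH$ restricted to the excited subspace. No analytic estimates beyond $\sup_{x\ge \Delta} e^{-x^2/2\sigma^2} = e^{-\Delta^2/2\sigma^2}$ are needed, and in particular the identity $\frac{1}{\sqrt{2\pi}}\int f = 1$ (equivalently $\hat f(0)=1$) is exactly what makes the $\nu = 0$ ground-ground term equal $\bra{\psi_0}\vA\ket{\psi_0}$ on the nose.
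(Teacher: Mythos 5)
Your proof is correct and takes essentially the same approach as the paper: decompose $\hat{\vA}_f$ into ground--ground, cross, and excited--excited blocks, set $\vA^{\perp}$ to be the excited--excited block, and bound each cross block by $e^{-\Delta^2/2\sigma^2}$ using the Gaussian decay past the gap. The paper leaves the final factor-of-$2$-avoiding observation implicit whereas you spell it out (with a harmless typo swapping which of $\vE_{\mathrm{up}},\vE_{\mathrm{down}}$ has range in $\mathrm{span}\{\ket{\psi_0}\}$ versus $\mathrm{range}(\vP_{\ge 1})$; the argument is unaffected).
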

\begin{proof} The off-diagonal terms are suppressed by the Gaussian decay
    \begin{align}
        \lnorm{(\vI-\ket{\psi_0}\bra{\psi_0})\sum_{\nu \in B(\vH)} \vA_{\nu} \hat{f}(-\nu)\ket{\psi_0}\bra{\psi_0}}  &= \lnorm{ \sum_{i>0} e^{-\frac{(E_i-E_0)^2}{2\sigma^2}} \ket{\psi_i}\bra{\psi_i} \vA\ket{\psi_0}\bra{\psi_0}} \\
        &\le \lnorm{\sum_{i>0} e^{-\frac{(E_i-E_0)^2}{2\sigma^2}} \ket{\psi_i}\bra{\psi_i}} \le e^{-\Delta^2/2\sigma^2},
    \end{align}    
    and similarly for $\ket{\psi_0}\bra{\psi_0}\hat{\vA}_f(\vI-\ket{\psi_0}\bra{\psi_0})$ since $\hat{\vA}_f$ is Hermitian.
\end{proof}

\subsubsection{Exact filters}
Sometimes it may be conceptually cleaner to reason about filtered operators $\hat{\vA}_f$ with \textit{exactly} no leakage to higher energies. It is possible to consider compactly supported bump functions that only allow energy differences restricted to a prescribed interval. Of course, the actual implementation will always come with a truncation error.
\begin{lem}[Compactly supported bump functions~{\cite[Lemma 2.3]{bachmann2012automorphic}\footnote{Note the exact factors of $\sqrt{2\pi}$ due to our normalization.}}]\label{lem:compact_bump}
For any parameter $\Delta \ge 0,$ consider 
\begin{align}
    f_{\Delta}(t) = c_{\Delta} \prod_{n=1}^{\infty} \L(\frac{\sin(a_n t)}{a_n t}\R)^{2}\quad \text{where}\quad a_n = \frac{a_1}{n\ln(n)} \quad \text{for}\quad n \ge 2
\end{align}
and $a_1$ such that $\sum_{n=1}^{\infty} a_n = \Delta/2$. There is a $c_{\Delta}$ such that $\frac{1}{\sqrt{2\pi}}\int_{-\infty}^{\infty} f(t) \rd t =1,$ and 
\begin{align}
    0 \le f_{\Delta}(t) \le 2\sqrt{2\pi}(e\Delta)^2 t \exp\L(-\frac{2}{7}\frac{\Delta t}{\ln(\Delta t)^2}\R)\quad \text{where}\quad t \ge \frac{e^{1/\sqrt{2}}}{\Delta},
\end{align}
such that 
\begin{align}\label{eq:compact_f}
\hat{f}(\nu) = \begin{cases}
    0 \quad &\text{if}\quad \nu \notin [-\Delta,\Delta]\\
1\quad &\text{if} \quad \nu =0
\end{cases}.
\end{align}    
\end{lem}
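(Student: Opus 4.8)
The plan is to verify the three claims about $f_\Delta$ — positivity, the integral normalization via the constant $c_\Delta$, and the frequency-support property \eqref{eq:compact_f} — and then prove the explicit tail bound, which I expect to be the main work. Throughout I would work with the (unnormalized) infinite product $g(t) := \prod_{n=1}^\infty \bigl(\sin(a_n t)/(a_n t)\bigr)^2$, check that it defines a legitimate function, and only at the end absorb the normalization constant.

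First I would establish convergence of the product. Since $0 \le \bigl(\sin(a_n t)/(a_n t)\bigr)^2 \le 1$ for all real arguments, each partial product is decreasing in the number of factors and bounded, so the pointwise limit exists; I would note $\sum_n a_n = \Delta/2 < \infty$ (which requires checking $a_1 \sum_{n\ge 2} 1/(n\ln n)$ — wait, that series diverges, so actually the $a_n$ must be chosen with the $1/(n\ln n)$ replaced so that the sum converges; I would follow \cite{bachmann2012automorphic} and take for granted that the stated $a_n$ with the $(\ln n)^2$ or analogous correction as in the source makes $\sum a_n$ finite, fixing $a_1$ accordingly). Positivity $0 \le f_\Delta(t)$ is then immediate since it is a product of squares times the positive constant $c_\Delta$, and integrability follows because each individual factor $(\sin(a_1 t)/(a_1 t))^2$ is integrable and dominates the product, so $c_\Delta := \bigl(\frac{1}{\sqrt{2\pi}}\int g\bigr)^{-1} \in (0,\infty)$ is well-defined, giving the normalization.

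For the Fourier-support claim, the key fact is that $(\sin(a t)/(a t))^2$ has Fourier transform supported on $[-2a, 2a]$ — it is (a multiple of) the triangle function, the autocorrelation of an indicator of width $2a$. A finite product of such factors therefore has Fourier transform supported on $[-2\sum_{n\le N} a_n,\, 2\sum_{n\le N} a_n]$ by the convolution theorem, and since $2\sum_{n} a_n = \Delta$, passing to the limit (justified by dominated convergence / continuity of the Fourier transform on the relevant function spaces) gives $\Supp \hat g \subseteq [-\Delta, \Delta]$, hence $\hat f_\Delta(\nu) = 0$ for $\nu \notin [-\Delta,\Delta]$. The value $\hat f_\Delta(0) = 1$ is exactly the normalization $\frac{1}{\sqrt{2\pi}}\int f_\Delta(t)\,\rd t = 1$ by definition of $c_\Delta$.

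The hard part, and the only genuinely quantitative step, is the pointwise decay bound $f_\Delta(t) \le 2\sqrt{2\pi}(e\Delta)^2 t \exp\bigl(-\frac{2}{7}\frac{\Delta t}{\ln(\Delta t)^2}\bigr)$ for $t \ge e^{1/\sqrt 2}/\Delta$. Here I would keep the first factor $n=1$ intact to supply the prefactor $\sim (\Delta t)^{-2}$ and use a careful subset of the remaining factors to extract the stretched-exponential. For each $n$ with $a_n t \ge 1$ one has $(\sin(a_n t)/(a_n t))^2 \le 1/(a_n t)^2$; choosing the largest $N = N(t)$ with $a_N t \gtrsim 1$ and multiplying these bounds yields $g(t) \lesssim \prod_{n\le N} (a_n t)^{-2} = \exp\bigl(-2\sum_{n\le N}\ln(a_n t)\bigr)$, and since $a_n \approx a_1/(n\ln n)$ the sum $\sum_{n\le N}\ln(a_n t)$ grows like $N\ln(\Delta t)$ up to lower-order terms while the constraint $a_N t \approx 1$ forces $N \approx \Delta t/\ln(\Delta t)$; combining gives the exponent $\asymp \Delta t/\ln(\Delta t)^2$, and chasing the constants (using the threshold $t \ge e^{1/\sqrt2}/\Delta$ to control logarithms and the $\ln\ln$ terms) produces the stated $\frac{2}{7}$. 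This constant-chasing, together with rigorously bounding the neglected tail factors $\prod_{n>N}(\cdots) \le 1$, is where the real effort lies; since this lemma is quoted from \cite[Lemma 2.3]{bachmann2012automorphic}, I would in practice cite their estimate and only reproduce the structure of the argument, noting the $\sqrt{2\pi}$ discrepancies coming from our Fourier normalization convention.
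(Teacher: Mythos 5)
The paper does not actually prove this lemma; it is quoted verbatim (modulo Fourier normalization) from \cite[Lemma~2.3]{bachmann2012automorphic}, so your reconstruction is being compared against that source rather than anything in this manuscript. Your outline follows the standard Ingham-type construction correctly: the key structural facts you identify --- that $\widehat{\operatorname{sinc}^2(a\,\cdot)}$ is a triangle supported on $[-2a,2a]$, that a product of such factors Fourier-transforms to a convolution of triangles supported on $\bigl[-2\sum a_n, 2\sum a_n\bigr]=[-\Delta,\Delta]$, and that the tail decay comes from multiplying the bounds $(\sin(a_n t)/(a_n t))^2\le (a_n t)^{-2}$ over those $n$ with $a_n t\ge 1$ --- are exactly the steps in the cited proof. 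The counting $N(t)\sim \Delta t/\text{polylog}(\Delta t)$ of such indices and the resulting stretched-exponential exponent are also the right shape, and deferring the explicit constant $2/7$ and the threshold $t\ge e^{1/\sqrt 2}/\Delta$ to the source is reasonable for a quoted lemma.

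You also caught a genuine typographical error in the statement as reproduced here: with $a_n = a_1/(n\ln n)$ for $n\ge 2$, the series $\sum_{n\ge 2} 1/(n\ln n)$ diverges, so no finite $a_1$ can achieve $\sum_n a_n=\Delta/2$ and the constant $c_\Delta$ cannot exist. The correct choice in \cite{bachmann2012automorphic} has an extra logarithm in the denominator (so that $\sum_n a_n$ converges), which is also what produces the $\ln(\Delta t)^2$ in the decay exponent, consistent with your back-of-envelope derivation. Flagging this rather than papering over it is the right call; the manuscript should amend the exponent on $\ln n$ to match the source.

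Two minor points to tighten if you were to write this out fully. First, for the normalization step you should justify $\int g < \infty$ not just by saying the $n=1$ factor dominates (true since the other factors are $\le 1$), but also confirm $\int g > 0$, which follows from $g$ being continuous, nonnegative, and $g(0)=1$. Second, for passing the compact Fourier support to the infinite product, it is cleanest to note that the partial products converge to $g$ in $L^1$ (dominated by the integrable $n=1$ factor), so $\hat g$ is the uniform limit of the finite-product transforms, each supported in $[-\Delta,\Delta]$; this removes any need to invoke distributional continuity.
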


\begin{lem}[Exactly block-diagonal]
Given a $\Delta$-gapped Hamiltonian, the filtered operator $\hat{\vA}_f$ for the bump function (\autoref{lem:compact_bump}) is Hermitian and block-diagonal in the energy basis
\begin{align}
    \hat{\vA}_f = \ket{\psi_0}\bra{\psi_0}\vA \ket{\psi_0}\bra{\psi_0} + \vA^{\perp}
\end{align}
for some operator $\vA^{\perp}$ such that $\vA^{\perp}\ket{\psi_0} = \bra{\psi_0}\vA^{\perp}= 0.$ 
\end{lem}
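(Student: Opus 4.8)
The plan is to read the statement off almost immediately from the frequency-space identity $\hat{\vA}_f=\sum_{\nu\in B(\vH)}\vA_{\nu}\,\hat{f}(-\nu)$ derived above, combined with the exact support property \eqref{eq:compact_f} of the bump filter in \autoref{lem:compact_bump}. Since $f_{\Delta}$ is real-valued---it is a nonnegative multiple of a product of squared sinc functions---the operator $\hat{\vA}_f$ is Hermitian by the same observation used for the Gaussian filter, so it suffices to control the ground/excited off-diagonal block.

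I would decompose $\hat{\vA}_f$ with respect to the projectors $\ket{\psi_0}\bra{\psi_0}$ and $\vI-\ket{\psi_0}\bra{\psi_0}$. The off-diagonal piece is
\begin{align}
(\vI-\ket{\psi_0}\bra{\psi_0})\,\hat{\vA}_f\,\ket{\psi_0}\bra{\psi_0}=\sum_{i>0}\hat{f}\big(-(E_i-E_0)\big)\,\ket{\psi_i}\bra{\psi_i}\vA\ket{\psi_0}\bra{\psi_0}.
\end{align}
Because the $\Delta$-gapped Hamiltonian has a unique ground state, every $i>0$ satisfies $E_i-E_0\ge E_1-E_0\ge\Delta$, hence $-(E_i-E_0)\le-\Delta$; since $\hat{f}$ vanishes on $\BR\setminus(-\Delta,\Delta)$ (by \eqref{eq:compact_f} together with the endpoint vanishing discussed below), every term is zero, so this block vanishes exactly, and by Hermiticity so does $\ket{\psi_0}\bra{\psi_0}\hat{\vA}_f(\vI-\ket{\psi_0}\bra{\psi_0})$. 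The ground/ground block carries Bohr frequency $E_0-E_0=0$ and equals $\hat{f}(0)\,\ket{\psi_0}\bra{\psi_0}\vA\ket{\psi_0}\bra{\psi_0}=\ket{\psi_0}\bra{\psi_0}\vA\ket{\psi_0}\bra{\psi_0}$ since $\hat{f}(0)=1$. Taking $\vA^{\perp}:=(\vI-\ket{\psi_0}\bra{\psi_0})\,\hat{\vA}_f\,(\vI-\ket{\psi_0}\bra{\psi_0})$, which is annihilated by $\ket{\psi_0}$ on both sides, then gives the claimed decomposition.

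The only point deserving a line of care is the degenerate case $E_1-E_0=\Delta$, where one needs $\hat{f}(\pm\Delta)=0$ and not merely $\hat{f}(\nu)=0$ for $|\nu|>\Delta$ as literally stated in \eqref{eq:compact_f}. This holds because $\hat{f}$ is continuous---it is the Fourier transform of the integrable function $f_{\Delta}$, integrability being guaranteed by the tail bound in \autoref{lem:compact_bump}---and vanishes on $(\Delta,\infty)$, so $\hat{f}(\Delta)=\lim_{\nu\downarrow\Delta}\hat{f}(\nu)=0$, and symmetrically at $-\Delta$. I do not anticipate any genuine obstacle here: this lemma is simply the exact-support counterpart of the Gaussian estimate, with the exponentially small leakage $e^{-\Delta^2/2\sigma^2}$ replaced by an identically vanishing one.
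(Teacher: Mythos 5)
Your proof is correct and takes essentially the same route as the paper's: expand the off-diagonal block $(\vI-\ket{\psi_0}\bra{\psi_0})\hat{\vA}_f\ket{\psi_0}\bra{\psi_0}$ over Bohr frequencies $\nu\ge\Delta$, observe that $\hat f(-\nu)=0$ there, and dispose of the other off-diagonal block by Hermiticity. Your extra remark about the endpoint case $E_1-E_0=\Delta$ (that $\hat f(\pm\Delta)=0$ follows from continuity of the Fourier transform of an $L^1$ function together with vanishing on $|\nu|>\Delta$) is a legitimate small point that the paper's terse proof elides by writing the sum over $\nu\ge\Delta$ without comment.
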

\begin{proof}
The off-diagonal terms vanish exactly
    \begin{align}
        (\vI-\ket{\psi_0}\bra{\psi_0})\hat{\vA}_f\ket{\psi_0}\bra{\psi_0} = \sum_{i=1}\ket{\psi_i}\bra{\psi_i}\sum_{\nu \in B(\vH), \nu \ge \Delta} \vA_{\nu} \hat{f}(-\nu)\ket{\psi_0}\bra{\psi_0} = 0,
    \end{align}    
    and similarly for $\ket{\psi_0}\bra{\psi_0}\hat{\vA}_f(\vI-\ket{\psi_0}\bra{\psi_0})$ since $\hat{\vA}_f$ is Hermitian.
\end{proof}

\section{Lower bounds} \label{sec:lower}
We collect two pieces of lower bounds.
\subsection{Hamiltonian evolution time}

Already for a single qubit, the Hamiltonian evolution time is optimal up to logarithmic factors in the black-box setting.
\begin{fact}[Lower bounds on black-box Hamiltonian evolution time] \label{fact:lowerbound_ham}
    Let $\vH$ be a $\Delta$-gapped Hamiltonian with ground state $\ket{\psi_0}$. Any algorithm that learns $\langle \psi_0|\vA\ket{\psi_0}$ to precision $\epsilon \ll 1$ with constant probability from a single copy of $\ket{\psi_0}$ and black-box controlled forwards and backwards Hamiltonian evolution must use $\Omega(1/(\Delta\epsilon))$ Hamiltonian evolution time.
\end{fact}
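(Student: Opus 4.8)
The plan is to exhibit a one-parameter family of single-qubit instances that an algorithm with total Hamiltonian evolution time $T = o(1/(\Delta\epsilon))$ cannot distinguish, yet whose values of $\langle\psi_0|\vA|\psi_0\rangle$ differ by more than $2\epsilon$. First I would fix the observable to be a Pauli, say $\vA = \vZ$ (so it is self-adjoint unitary and trivially block-encoded), and take the Hamiltonian to be $\vH_\theta = \tfrac{\Delta}{2}(\vI - \vec n_\theta\cdot\vec\sigma)$ for a unit vector $\vec n_\theta$ in, say, the $x$--$z$ plane making angle $\theta$ with the $z$-axis. Each such $\vH_\theta$ is $\Delta$-gapped with ground state $|\psi_0(\theta)\rangle$ the $+1$ eigenstate of $\vec n_\theta\cdot\vec\sigma$, and $\langle\psi_0(\theta)|\vZ|\psi_0(\theta)\rangle = \cos\theta$. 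So distinguishing $\theta$ from $\theta'$ with $|\cos\theta - \cos\theta'| > 2\epsilon$ is necessary; choosing both angles near $\pi/2$, this needs $|\theta-\theta'| \gtrsim \epsilon$.

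Next I would bound how much information the algorithm can extract. The algorithm is a quantum circuit that interleaves input-independent gates with controlled calls to $e^{\pm i\vH_\theta t_k}$ for chosen times $t_k$, acting on the single copy of $|\psi_0(\theta)\rangle$ plus ancillas, with $\sum_k |t_k| \le T$. The key point is that $e^{i\vH_\theta t} = e^{-i\Delta t/2}\,e^{i (\Delta t/2)\,\vec n_\theta\cdot\vec\sigma}$, and $\|\,e^{i(\Delta t/2)\vec n_\theta\cdot\vec\sigma} - e^{i(\Delta t/2)\vec n_{\theta'}\cdot\vec\sigma}\,\| \le \tfrac{\Delta |t|}{2}\,\|\vec n_\theta\cdot\vec\sigma - \vec n_{\theta'}\cdot\vec\sigma\| \le \Delta |t|\,|\theta - \theta'|$ (Lipschitz continuity of the exponential together with $\|\vec n_\theta - \vec n_{\theta'}\| \le |\theta-\theta'|$), and likewise the initial states satisfy $\|\,|\psi_0(\theta)\rangle - |\psi_0(\theta')\rangle\,\| \le C|\theta-\theta'|$. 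A hybrid argument over the at most $\mathrm{poly}$ many oracle calls then gives that the final states of the two runs are within trace distance $O\big((\Delta T + 1)\,|\theta - \theta'|\big)$; the $+1$ absorbs the input-state difference (which is one-shot, not amplified by $T$). Hence any measurement outcome distribution differs by at most this amount.

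Combining the two bounds: to succeed with constant advantage the algorithm needs $\Delta T\,|\theta-\theta'| = \Omega(1)$ whenever $|\theta-\theta'| = \Theta(\epsilon)$, i.e. $T = \Omega(1/(\Delta\epsilon))$. I would present this via Le Cam's two-point method (or a Holevo/fidelity version): if the two output distributions have total variation $<\tfrac12$ then no estimator can be $\epsilon$-accurate with probability $\ge\tfrac23$ on both instances simultaneously, forcing $\Delta T\,\epsilon = \Omega(1)$.

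The main obstacle I anticipate is making the hybrid/perturbation bound fully rigorous in the presence of \emph{controlled} forwards and backwards evolution with \emph{adaptively chosen} evolution times: the evolution times $t_k$, and even whether a call is forwards or backwards, may depend on earlier measurement results, so the bound on $\sum_k|t_k|$ is on the realized run, and one must be slightly careful that the hybrid argument still telescopes (it does, since conditioning on a fixed transcript fixes the $t_k$, and one can bound the per-branch error and take expectations). A secondary subtlety is that controlling on the evolution acts as $e^{iZ_c \otimes \vH t}$-type gates, which are equally Lipschitz in $\theta$, so nothing essential changes. I would also remark that the bound is tight up to the logarithmic factors appearing in \autoref{alg:main}, which the matching upper bound supplies, and note the trivial $\Omega(1/\epsilon)$ lower bound on block-encoding calls from the same family with $\Delta$ fixed.
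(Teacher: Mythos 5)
Your proof is correct, and it targets the same one-qubit construction the paper uses (a gapped Hamiltonian whose ground state points along an unknown Bloch angle $\theta$), but the route you take to the lower bound differs. The paper observes that estimating $\bra{\psi_0}\vX\ket{\psi_0}$ and $\bra{\psi_0}\vY\ket{\psi_0}$ to precision $\epsilon$ pins down $\theta$ to precision $\CO(\epsilon)$, and then \emph{cites} a standard single-parameter metrology lower bound (Theorem~2 of~\cite{huang2023learning}) to conclude $\Omega(1/\epsilon)$ evolution time; it disposes of the single free copy of $\ket{\psi_0}$ by noting that such a copy can be prepared from scratch with $\CO(1)$ evolution time (project the maximally mixed state), so it cannot change the asymptotics. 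You instead prove the metrology lower bound from scratch via a Lipschitz-plus-hybrid argument and Le Cam's two-point method, absorbing the free copy directly into the hybrid as the one-shot $\CO(|\theta-\theta'|)$ term in $(\Delta T + 1)|\theta-\theta'|$. Your version is more self-contained and makes the handling of adaptivity explicit; the paper's version is shorter and leans on a known theorem. A further cosmetic difference: by placing the Bloch vector in the $x$--$z$ plane and centering $\theta$ near $\pi/2$, you get $\labs{\cos\theta - \cos\theta'} = \Theta(\labs{\theta-\theta'})$ for the single observable $\vZ$, avoiding the paper's need to invoke two observables $\vX,\vY$ to cover all angles. Both are sound; your approach is essentially a fleshed-out proof of the citation the paper relies on. One small slip in your writeup: the sign in $e^{i\vH_\theta t} = e^{i\Delta t/2}\,e^{-i(\Delta t/2)\,\vec n_\theta\cdot\vsigma}$ is flipped relative to what you wrote, though this has no effect on the operator-norm Lipschitz bound or the conclusion.
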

\begin{proof}
    Consider the single-qubit Hamiltonian $\vH = -(\cos(\theta)\vX + \sin(\theta)\vY)$, with ground state $\ket{\psi_0} = \frac{1}{\sqrt{2}} (|0\rangle + e^{i\theta}|1\rangle)$. Measuring $\bra{\psi_0}\vX\ket{\psi_0}$ and $\bra{\psi_0}\vY\ket{\psi_0}$ each to precision $\epsilon$ yields $\theta$ to precision $\mathcal{O}(\epsilon)$. This requires $\Omega(1/\epsilon)$ evolution time by standard metrology lower bounds (for example, see \cite[Theorem 2]{huang2023learning}). Moreover, the single copy of $\ket{\psi_0}$ cannot improve this scaling, since we can create a copy of $\ket{\psi_0}$ using constant Hamiltonian evolution time by projecting the maximally mixed state onto $\ket{\psi_0}$. Finally, by considering the Hamiltonian $\vH = -\Delta(\cos(\theta)\vX + \sin(\theta)\vY)$, we can see that the necessary evolution time is $\Omega(1/(\Delta\epsilon))$ for a $\Delta$-gapped Hamiltonian as desired.
\end{proof}

\subsection{Locality of filtered operator} \label{sec:locality_Heisenberg}

Following the original arguments of~\cite{hastings2006spectral}, we isolate a lower bound on the locality of a filtered operator that is closely related to our~$\hat{\vA}_f$ in terms of the correlation length of the ground state. We don't know how to give a similar lower bound on our $\hat{\vA}_f$, as the following decay of correlation argument exploits an asymmetric filter $\hat{f}$.

\begin{lem}[Quasi-local filtering implies decay of correlation]\label{lem:local_decay_correlation}
    Let $\vH$ be a $\Delta$-gapped Hamiltonian with ground state $\ket{\psi_0}$. Let $\vA,\vB$ be Hermitian operators such that $\norm{\vA},\norm{\vB}\le 1$, with spatially separated supports so that $[\vA,\vB]=0$. Consider the filtered operator $\hat{\vA}_f$ for a function $f$ such that $\hat{f}(-\nu) = 0$ when $\nu \ge \Delta$ and $\hat{f}(-\nu) = 1$ when $\nu \le 0$. If
    \begin{equation}
        \norm{[\vB,\hat{\vA}_f]} \le \eta,
    \end{equation}
    then the correlation is bounded by
    \begin{align}
        \labs{\bra{\psi_0} \vB \vA \ket{\psi_0}- \bra{\psi_0} \vB \ket{\psi_0} \bra{\psi_0}\vA\ket{\psi_0}} \le \eta,
    \end{align}
    as advertised.
\end{lem}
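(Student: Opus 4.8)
The plan is to exploit that, although $\hat{\vA}_f$ need not coincide with $\vA$, the two hypotheses on $\hat f$ force $\hat{\vA}_f$ to behave like $\vA$ on the ground state in a one-sided sense: it acts as the \emph{scalar} $\bra{\psi_0}\vA\ket{\psi_0}$ when applied to $\ket{\psi_0}$ on the right, and as the \emph{operator} $\vA$ itself when applied to $\bra{\psi_0}$ on the left. Precisely, I would first establish
\begin{align}
\hat{\vA}_f\ket{\psi_0}=\bra{\psi_0}\vA\ket{\psi_0}\,\ket{\psi_0}
\qquad\text{and}\qquad
\bra{\psi_0}\hat{\vA}_f=\bra{\psi_0}\vA,
\end{align}
after which the lemma is a one-line computation.

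For the first identity, expand $\hat{\vA}_f\ket{\psi_0}=\sum_i \hat f\big(-(E_i-E_0)\big)\,\ket{\psi_i}\bra{\psi_i}\vA\ket{\psi_0}$. Because $\ket{\psi_0}$ is the ground state, each Bohr frequency $\nu=E_i-E_0$ is nonnegative, and by the gap it is either $0$ --- in which case $\hat f(0)=1$ by the clause $\hat f(-\nu)=1$ for $\nu\le 0$ --- or at least $\Delta$ --- in which case $\hat f(-\nu)=0$; uniqueness of the ground state then leaves only the $i=0$ term. For the second identity, expand $\bra{\psi_0}\hat{\vA}_f=\sum_j \hat f(E_j-E_0)\,\bra{\psi_0}\vA\ket{\psi_j}\bra{\psi_j}$; here the relevant $\nu=E_0-E_j$ is always $\le 0$, so \emph{every} coefficient equals $1$ and the sum collapses to $\bra{\psi_0}\vA$. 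This second step is exactly where the asymmetry of $\hat f$ is essential --- the clause $\hat f(-\nu)=1$ must hold for all $\nu\le 0$, not merely at $\nu=0$ --- which is why the argument does not survive replacing $\hat f$ by a symmetric filter.

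Granting the two identities, I would write the connected correlator, using $[\vA,\vB]=0$ to move $\vA$ to the left:
\begin{align}
\bra{\psi_0}\vB\vA\ket{\psi_0}-\bra{\psi_0}\vB\ket{\psi_0}\bra{\psi_0}\vA\ket{\psi_0}
&=\bra{\psi_0}\vA\vB\ket{\psi_0}-\bra{\psi_0}\vA\ket{\psi_0}\bra{\psi_0}\vB\ket{\psi_0}\\
&=\bra{\psi_0}\hat{\vA}_f\vB\ket{\psi_0}-\bra{\psi_0}\vB\,\hat{\vA}_f\ket{\psi_0}
=\bra{\psi_0}[\hat{\vA}_f,\vB]\ket{\psi_0},
\end{align}
where the first equality of the second line uses $\bra{\psi_0}\vA=\bra{\psi_0}\hat{\vA}_f$ and $\hat{\vA}_f\ket{\psi_0}=\bra{\psi_0}\vA\ket{\psi_0}\ket{\psi_0}$. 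The right-hand side is bounded in modulus by $\norm{[\hat{\vA}_f,\vB]}=\norm{[\vB,\hat{\vA}_f]}\le\eta$, which is the claim. I do not expect any genuine obstacle: the whole argument is the bookkeeping in the middle paragraph --- matching the sign of each Bohr frequency $E_i-E_0$ to the correct clause of the hypothesis on $\hat f$ --- and the spectral gap is invoked only once, to exclude Bohr frequencies lying strictly inside $(0,\Delta)$. No property of $\hat f$ beyond the two stated clauses is used.
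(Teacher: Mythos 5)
Your proof is correct and takes essentially the same route as the paper's: both arguments hinge on the two one-sided identities $\hat{\vA}_f\ket{\psi_0}=\bra{\psi_0}\vA\ket{\psi_0}\ket{\psi_0}$ and $\bra{\psi_0}\hat{\vA}_f=\bra{\psi_0}\vA$, and both reduce the connected correlator to $\bra{\psi_0}[\hat{\vA}_f,\vB]\ket{\psi_0}$ (the paper phrases it as $\bra{\psi_0}[\vB,\vA-\hat{\vA}_f]\ket{\psi_0}$ and invokes $[\vA,\vB]=0$ at the end rather than up front, a purely cosmetic difference). You also spell out the spectral derivation of the two identities, which the paper only asserts.
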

\begin{proof}
    Since $\hat{\vA}_{f}\ket{\psi_0} = (\bra{\psi_0}\vA\ket{\psi_0}) \cdot \ket{\psi_0}$, we have
    \begin{align}
        \bra{\psi_0} \vB \vA \ket{\psi_0}- \bra{\psi_0} \vB \ket{\psi_0} \bra{\psi_0}\vA\ket{\psi_0} &= \bra{\psi_0} \vB \vA \ket{\psi_0}- \bra{\psi_0} \vB \hat{\vA}_{f}\ket{\psi_0} \\
        &= \bra{\psi_0}(\vA -\hat{\vA}_{f}) \vB  \ket{\psi_0} + \bra{\psi_0} [\vB,\vA -\hat{\vA}_{f}] \ket{\psi_0}.
    \end{align}
    Now $\bra{\psi_0}(\vA -\hat{\vA}_{f}) = 0$ since there is no state with energies lower than the ground state, and
    \begin{equation}
        \labs{\bra{\psi_0} [\vB,\vA -\hat{\vA}_{f}] \ket{\psi_0}} \le \norm{[\vB,\hat{\vA}_{f}]} \le \eta.
    \end{equation}
\end{proof}

In the case of the Transverse Field Ising Model $(J=1)$
\begin{align}
\vH = \sum_{i} \vZ_i \vZ_{i+1} + g \sum_i \vX_i,    
\end{align}
the correlation length diverges with the inverse gap $\Omega(1/\Delta)$ as $g\rightarrow 1$ (see, e.g., ~\cite[Chapter 5]{Sachdev2011QPT}). Combining with~\autoref{lem:local_decay_correlation}, an $\Omega(1/\Delta)$ locality is required to implement the asymmetric filtered $\hat{\vA}_f$.

\subsection*{Acknowledgements}

CFC is supported by a Simons-CIQC postdoctoral fellowship through NSF QLCI Grant No. 2016245. The authors are grateful to the following individuals for insightful discussions: Hsin-Yuan Huang, Daniel Ranard, Thiago Bergamaschi, William Huggins, Stephen Jordan, Alec White, Guang Hao Low, Nicholas Rubin, Rolando Somma, Robin Kothari, Ryan Babbush, Thomas Vidick and Umesh Vazirani.

\bibliographystyle{alphaUrlePrint.bst}
\bibliography{ref,robbie_refs}

@article{dalzell2023quantum,
  title={Quantum algorithms: A survey of applications and end-to-end complexities},
  author={Dalzell, Alexander M and McArdle, Sam and Berta, Mario and Bienias, Przemyslaw and Chen, Chi-Fang and Gily{\'e}n, Andr{\'a}s and Hann, Connor T and Kastoryano, Michael J and Khabiboulline, Emil T and Kubica, Aleksander and others},
  journal={arXiv preprint arXiv:2310.03011},
  year={2023}
}

@article{chen2019finite,
  title={Finite speed of quantum scrambling with long range interactions},
  author={Chen, Chi-Fang and Lucas, Andrew},
  journal={Physical review letters},
  volume={123},
  number={25},
  pages={250605},
  year={2019},
  publisher={APS}
}

@article{kuwahara2020strictly,
  title={Strictly linear light cones in long-range interacting systems of arbitrary dimensions},
  author={Kuwahara, Tomotaka and Saito, Keiji},
  journal={Physical Review X},
  volume={10},
  number={3},
  pages={031010},
  year={2020},
  publisher={APS}
}

@article{kato2025clustering,
  title={On the clustering of conditional mutual information via dissipative dynamics},
  author={Kato, Kohtaro and Kuwahara, Tomotaka},
  journal={arXiv preprint arXiv:2504.02235},
  year={2025}
}

@article{chen2025quantum,
  title={Quantum Gibbs states are locally Markovian},
  author={Chen, Chi-Fang and Rouz{\'e}, Cambyse},
  journal={arXiv preprint arXiv:2504.02208},
  year={2025}
}

@article{bergamaschi2025structural,
  title={A Structural Theory of Quantum Metastability: Markov Properties and Area Laws},
  author={Bergamaschi, Thiago and Chen, Chi-Fang and Vazirani, Umesh},
  journal={arXiv preprint arXiv:2510.08538},
  year={2025}
}

@article{nachtergaele2006lieb,
  title={Lieb-Robinson bounds and the exponential clustering theorem},
  author={Nachtergaele, Bruno and Sims, Robert},
  journal={Communications in mathematical physics},
  volume={265},
  number={1},
  pages={119--130},
  year={2006},
  publisher={Springer}
}

@incollection{Sachdev2011QPT,
  author    = {Sachdev, Subir},
  title     = {The Quantum Ising Model},
  booktitle = {Quantum Phase Transitions},
  edition   = {2},
  pages     = {58--78},
  publisher = {Cambridge University Press},
  address   = {Cambridge},
  year      = {2011},
  isbn      = {978-0-521-51468-2}
}

@article{bachmann2012automorphic,
  title={Automorphic equivalence within gapped phases of quantum lattice systems},
  author={Bachmann, Sven and Michalakis, Spyridon and Nachtergaele, Bruno and Sims, Robert},
  journal={Communications in Mathematical Physics},
  volume={309},
  number={3},
  pages={835--871},
  year={2012},
  publisher={Springer}
}

@article{hastings2006spectral,
  title={Spectral gap and exponential decay of correlations},
  author={Hastings, Matthew B and Koma, Tohru},
  journal={Communications in mathematical physics},
  volume={265},
  pages={781--804},
  year={2006},
  publisher={Springer}
}

@article{ding2024single,
  title={{Single-ancilla ground state preparation via Lindbladians}},
  author={Ding, Zhiyan and Chen, Chi-Fang and Lin, Lin},
  journal={Physical Review Research},
  volume={6},
  number={3},
  pages={033147},
  year={2024},
  publisher={APS}
}

@article{chen2023speed,
  title={Speed limits and locality in many-body quantum dynamics},
  author={Chen, Chi-Fang Anthony and Lucas, Andrew and Yin, Chao},
  journal={Reports on Progress in Physics},
  volume={86},
  number={11},
  pages={116001},
  year={2023},
  publisher={IOP Publishing}
}

@article{chen2023quantum,
  title={Quantum thermal state preparation},
  author={Chen, Chi-Fang and Kastoryano, Michael J and Brand{\~a}o, Fernando GSL and Gily{\'e}n, Andr{\'a}s},
  journal={arXiv preprint arXiv:2303.18224},
  year={2023}
}

@article{chen2023efficient,
  title={An efficient and exact noncommutative quantum Gibbs sampler},
  author={Chen, Chi-Fang and Kastoryano, Michael J and Gily{\'e}n, Andr{\'a}s},
  journal={arXiv preprint arXiv:2311.09207},
  year={2023}
}

@article{bakshi2024high,
  title={High-temperature Gibbs states are unentangled and efficiently preparable},
  author={Bakshi, Ainesh and Liu, Allen and Moitra, Ankur and Tang, Ewin},
  journal={arXiv preprint arXiv:2403.16850},
  year={2024}
}

@misc{McArdle_2022quantumstate,
  title = {Quantum state preparation without coherent arithmetic},
  author = {McArdle, Sam and Gilyén, András and Berta, Mario},
  year = {2022},
  doi = {10.48550/arxiv.2210.14892},  
  journal={arXiv preprint arXiv:{2210.14892}},
}

@article{Hastings2007AnAL,
  title={An area law for one-dimensional quantum systems},
  author={Matthew B. Hastings},
  journal={Journal of Statistical Mechanics: Theory and Experiment},
  year={2007},
  volume={2007},
  pages={P08024 - P08024}
}

@article{haah2020quantum,
  title={Quantum algorithm for simulating real time evolution of lattice Hamiltonians},
  author={Haah, Jeongwan and Hastings, Matthew B and Kothari, Robin and Low, Guang Hao},
  journal={SIAM Journal on Computing},
  volume={52},
  number={6},
  pages={FOCS18--250},
  year={2021},
  publisher={SIAM}
}

@article{huang2020predicting,
  title={Predicting many properties of a quantum system from very few measurements},
  author={Huang, Hsin-Yuan and Kueng, Richard and Preskill, John},
  journal={Nature Physics},
  volume={16},
  number={10},
  pages={1050--1057},
  year={2020},
  publisher={Nature Publishing Group UK London}
}

@article{farhi2010quantum,
  title={Quantum state restoration and single-copy tomography for ground states of hamiltonians},
  author={Farhi, Edward and Gosset, David and Hassidim, Avinatan and Lutomirski, Andrew and Nagaj, Daniel and Shor, Peter},
  journal={Physical review letters},
  volume={105},
  number={19},
  pages={190503},
  year={2010},
  publisher={APS}
}

@article{huang2023learning,
  title={Learning many-body Hamiltonians with Heisenberg-limited scaling},
  author={Huang, Hsin-Yuan and Tong, Yu and Fang, Di and Su, Yuan},
  journal={Physical Review Letters},
  volume={130},
  number={20},
  pages={200403},
  year={2023},
  publisher={APS}
}

@article{scharnhorst2025optimal,
  title={Optimal lower bounds for quantum state tomography},
  author={Scharnhorst, Thilo and Spilecki, Jack and Wright, John},
  journal={arXiv preprint arXiv:2510.07699},
  year={2025}
}

@inproceedings{haah2016sample,
  title={Sample-optimal tomography of quantum states},
  author={Haah, Jeongwan and Harrow, Aram W and Ji, Zhengfeng and Wu, Xiaodi and Yu, Nengkun},
  booktitle={Proceedings of the forty-eighth annual ACM symposium on Theory of Computing},
  pages={913--925},
  year={2016}
}

@inproceedings{o2016efficient,
  title={Efficient quantum tomography},
  author={O'Donnell, Ryan and Wright, John},
  booktitle={Proceedings of the forty-eighth annual ACM symposium on Theory of Computing},
  pages={899--912},
  year={2016}
}

@article{knill2007optimal,
  title={Optimal quantum measurements of expectation values of observables},
  author={Knill, Emanuel and Ortiz, Gerardo and Somma, Rolando D},
  journal={Physical Review A—Atomic, Molecular, and Optical Physics},
  volume={75},
  number={1},
  pages={012328},
  year={2007},
  publisher={APS}
}

@inproceedings{aaronson2018shadow,
  title={Shadow tomography of quantum states},
  author={Aaronson, Scott},
  booktitle={Proceedings of the 50th annual ACM SIGACT symposium on theory of computing},
  pages={325--338},
  year={2018}
}

@article{huang2021information,
  title={Information-theoretic bounds on quantum advantage in machine learning},
  author={Huang, Hsin-Yuan and Kueng, Richard and Preskill, John},
  journal={Physical Review Letters},
  volume={126},
  number={19},
  pages={190505},
  year={2021},
  publisher={APS}
}

@article{zhao2021fermionic,
  title={Fermionic partial tomography via classical shadows},
  author={Zhao, Andrew and Rubin, Nicholas C and Miyake, Akimasa},
  journal={Physical Review Letters},
  volume={127},
  number={11},
  pages={110504},
  year={2021},
  publisher={APS}
}

@inproceedings{king2025triply,
  title={Triply efficient shadow tomography},
  author={King, Robbie and Gosset, David and Kothari, Robin and Babbush, Ryan},
  booktitle={Proceedings of the 2025 Annual ACM-SIAM Symposium on Discrete Algorithms (SODA)},
  pages={914--946},
  year={2025},
  organization={SIAM}
}

@article{huggins2021efficient,
  title={Efficient and noise resilient measurements for quantum chemistry on near-term quantum computers},
  author={Huggins, William J and McClean, Jarrod R and Rubin, Nicholas C and Jiang, Zhang and Wiebe, Nathan and Whaley, K Birgitta and Babbush, Ryan},
  journal={npj Quantum Information},
  volume={7},
  number={1},
  pages={23},
  year={2021},
  publisher={Nature Publishing Group UK London}
}

@article{bonet2020nearly,
  title={Nearly optimal measurement scheduling for partial tomography of quantum states},
  author={Bonet-Monroig, Xavier and Babbush, Ryan and O’Brien, Thomas E},
  journal={Physical Review X},
  volume={10},
  number={3},
  pages={031064},
  year={2020},
  publisher={APS}
}

@article{huggins2022nearly,
  title={Nearly optimal quantum algorithm for estimating multiple expectation values},
  author={Huggins, William J and Wan, Kianna and McClean, Jarrod and O’Brien, Thomas E and Wiebe, Nathan and Babbush, Ryan},
  journal={Physical Review Letters},
  volume={129},
  number={24},
  pages={240501},
  year={2022},
  publisher={APS}
}

@article{wan2023matchgate,
  title={Matchgate shadows for fermionic quantum simulation},
  author={Wan, Kianna and Huggins, William J and Lee, Joonho and Babbush, Ryan},
  journal={Communications in Mathematical Physics},
  volume={404},
  number={2},
  pages={629--700},
  year={2023},
  publisher={Springer}
}

@article{childs2012hamiltonian,
  title={Hamiltonian simulation using linear combinations of unitary operations},
  author={Childs, Andrew M and Wiebe, Nathan},
  journal={arXiv preprint arXiv:1202.5822},
  year={2012}
}

@inproceedings{gilyen2019optimizing,
  title={Optimizing quantum optimization algorithms via faster quantum gradient computation},
  author={Gily{\'e}n, Andr{\'a}s and Arunachalam, Srinivasan and Wiebe, Nathan},
  booktitle={Proceedings of the Thirtieth Annual ACM-SIAM Symposium on Discrete Algorithms},
  pages={1425--1444},
  year={2019},
  organization={SIAM}
}

@book{szabo1996modern,
  title={Modern quantum chemistry: introduction to advanced electronic structure theory},
  author={Szabo, Attila and Ostlund, Neil S},
  year={1996},
  publisher={Dover Publications}
}

@book{helgaker2000molecular,
  title={Molecular electronic-structure theory},
  author={Helgaker, Trygve and Jorgensen, Poul and Olsen, Jeppe},
  year={2000},
  publisher={John Wiley \& Sons}
}

@article{parr1995density,
  title={Density-functional theory of the electronic structure of molecules},
  author={Parr, Robert G and Yang, Weitao},
  journal={Annual review of physical chemistry},
  volume={46},
  number={1},
  pages={701--728},
  year={1995},
  publisher={Annual Reviews 4139 El Camino Way, PO Box 10139, Palo Alto, CA 94303-0139, USA}
}

@book{gross1995density,
  title={Density functional theory},
  author={Gross, Eberhard KU and Dreizler, Reiner M},
  volume={337},
  year={1995},
  publisher={Springer Science \& Business Media}
}

@article{schollwock2011density,
  title={The density-matrix renormalization group in the age of matrix product states},
  author={Schollw{\"o}ck, Ulrich},
  journal={Annals of physics},
  volume={326},
  number={1},
  pages={96--192},
  year={2011},
  publisher={Elsevier}
}

@article{orus2019tensor,
  title={Tensor networks for complex quantum systems},
  author={Or{\'u}s, Rom{\'a}n},
  journal={Nature Reviews Physics},
  volume={1},
  number={9},
  pages={538--550},
  year={2019},
  publisher={Nature Publishing Group UK London}
}

@article{verstraete2023density,
  title={Density matrix renormalization group, 30 years on},
  author={Verstraete, Frank and Nishino, Tomotoshi and Schollw{\"o}ck, Ulrich and Ba{\~n}uls, Mari Carmen and Chan, Garnet K and Stoudenmire, Miles E},
  journal={Nature Reviews Physics},
  volume={5},
  number={5},
  pages={273--276},
  year={2023},
  publisher={Nature Publishing Group UK London}
}

@article{aharonov1993measurement,
  title={Measurement of the Schr{\"o}dinger wave of a single particle},
  author={Aharonov, Yakir and Vaidman, Lev},
  journal={Physics Letters A},
  volume={178},
  number={1-2},
  pages={38--42},
  year={1993},
  publisher={Elsevier}
}

@article{aharonov1996adiabatic,
  title={Adiabatic measurements on metastable systems},
  author={Aharonov, Y and Massar, Serge and Popescu, S and Tollaksen, J and Vaidman, L},
  journal={Physical review letters},
  volume={77},
  number={6},
  pages={983},
  year={1996},
  publisher={APS}
}

@article{aharonov1993meaning,
  title={Meaning of the wave function},
  author={Aharonov, Yakir and Anandan, Jeeva and Vaidman, Lev},
  journal={Physical Review A},
  volume={47},
  number={6},
  pages={4616},
  year={1993},
  publisher={APS}
}

@article{marriott2005quantum,
  title={Quantum arthur--merlin games},
  author={Marriott, Chris and Watrous, John},
  journal={computational complexity},
  volume={14},
  number={2},
  pages={122--152},
  year={2005},
  publisher={Springer}
}

@article{elitzur1993quantum,
  title={Quantum mechanical interaction-free measurements},
  author={Elitzur, Avshalom C and Vaidman, Lev},
  journal={Foundations of physics},
  volume={23},
  number={7},
  pages={987--997},
  year={1993},
  publisher={Springer}
}

@article{nagaj2009fast,
  title={Fast amplification of QMA},
  author={Nagaj, Daniel and Wocjan, Pawel and Zhang, Yong},
  journal={arXiv preprint arXiv:0904.1549},
  year={2009}
}

@article{vaidman2014protective,
  title={Protective measurements of the wave function of a single system},
  author={Vaidman, Lev},
  journal={arXiv preprint arXiv:1401.6696},
  year={2014}
}

\appendix

\section{Implementation} \label{app:implementation}

Here, we describe the implementation details for \autoref{alg:main}, adapted from~\cite[Section III.B]{chen2023quantum}. We will assume black-box access to exact controlled Hamiltonian evolution:
\begin{align}
    \sum_{\bar{t}\in S_{t_0}} e^{\pm i \vH \bar{t}} \otimes \ket{\bar{t}}\bra{\bar{t}}
\end{align}
for some discretized time register of dimension $\labs{S_{t_0}}=2^k$
\begin{align}
    S_{t_0} = \L\{ -t_0 2^{k-1}, \cdots, -t_0,0,t_0, \cdots t_0 (2^{k-1}-1)\R\}.
\end{align}

\begin{lem}\label{lem:implement_Af}
Suppose $\norm{\vA} \le 1,$ there is an $\epsilon$-approximate block-encoding for $\hat{\vA}_f$ using one block-encoding for $\vA$ and 
\begin{align}
    \CO\L( \frac{\sqrt{\log(1/\epsilon)}}{\sigma}\R)\quad \text{or}\quad\CO\L( \frac{\log(1/\epsilon)\log\log(1/\epsilon)^3}{\Delta}\R) & \quad \text{controlled Hamiltonian evolution time for}\quad \vH 
\end{align}    
for the Gaussian with width $\sigma$~\eqref{eq:gaussian} or bump function with frequency support $[-\Delta,\Delta]$~\eqref{eq:compact_f}.
\end{lem}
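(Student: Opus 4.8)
The plan is to realize $\hat{\vA}_f$ as a linear combination of unitaries (LCU), following \cite[Section III.B]{chen2023quantum}: truncate the defining integral $\hat{\vA}_f = \frac{1}{\sqrt{2\pi}}\int_{\BR}\vA(t)f(t)\,\rd t$ to a finite window, discretize it onto the grid $S_{t_0}$, and compile the resulting finite sum with the circuit of \autoref{fig:A_LCU}. \emph{Truncation.} Since $\norm{\vA(t)} = \norm{\vA}\le 1$ for every $t$, replacing $\int_{\BR}$ by $\int_{-T}^{T}$ costs at most $\frac{1}{\sqrt{2\pi}}\int_{|t|>T}|f(t)|\,\rd t$ in operator norm. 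For the Gaussian $f(t) = \sigma e^{-\sigma^2 t^2/2}$, the elementary Gaussian tail bound gives $\frac{1}{\sqrt{2\pi}}\int_{|t|>T}f(t)\,\rd t\le e^{-\sigma^2 T^2/2}$, so $T = \CO(\sqrt{\log(1/\epsilon)}/\sigma)$ makes this $\le\epsilon$. For the bump function $f_\Delta$, we invoke the pointwise decay estimate of \autoref{lem:compact_bump}, $0\le f_\Delta(t)\le 2\sqrt{2\pi}(e\Delta)^2 t\, e^{-\frac{2}{7}\Delta t/\ln(\Delta t)^2}$ for $t\ge e^{1/\sqrt{2}}/\Delta$; integrating this stretched-exponential tail and inverting it yields $T = \CO(\log(1/\epsilon)\log\log(1/\epsilon)^3/\Delta)$.

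\emph{Discretization.} Next, approximate $\int_{-T}^{T}\vA(t)f(t)\,\rd t$ by the Riemann sum $\sum_{\bar{t}\in S_{t_0}} t_0\,\vA(\bar{t})f(\bar{t})$ over the grid of spacing $t_0$ restricted to $[-T,T]$ ($\CO(T/t_0)$ active points). Since $\norm{\tfrac{\rd}{\rd t}\vA(t)} = \norm{[\vH,\vA(t)]}\le 2\norm{\vH}$, a routine Riemann-sum bound shows that a spacing $t_0$ polynomially small in $\epsilon/\norm{\vH}$ (up to the variation scale of $f$) suffices --- equivalently a time register of $k = \CO(\log(\norm{\vH}\,T/\epsilon))$ qubits. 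The essential point is that $t_0$ enters only $k$ and the gate count, never the Hamiltonian evolution time, which is $\max_{\bar{t}\in S_{t_0}}|\bar{t}|\le T$.

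\emph{Assembly.} The finite sum $\sum_{\bar{t}\in S_{t_0}} c_{\bar{t}}\, e^{\ri\vH\bar{t}}\vA e^{-\ri\vH\bar{t}}$ with $c_{\bar{t}} = t_0 f(\bar{t})/\sqrt{2\pi}\ge 0$ is block-encoded by \autoref{fig:A_LCU}: prepare the time register in amplitudes $\propto\sqrt{c_{\bar{t}}}$ (the $\vec{Prep}_{\sqrt{f}}$ step), apply controlled $e^{-\ri\vH\bar{t}}$, apply the block-encoding $\vV_{\vA}$ of $\vA$ (which is dropped when $\vA$ is unitary), apply controlled $e^{\ri\vH\bar{t}}$, and unprepare; since $e^{\pm\ri\vH\bar{t}}$ commute past the encoding ancilla of $\vV_{\vA}$, the middle block $e^{\ri\vH\bar{t}}\vV_{\vA}e^{-\ri\vH\bar{t}}$ is a block-encoding of $\vA(\bar{t})$ and the whole circuit is a genuine LCU of block-encodings. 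Standard LCU analysis then yields a block-encoding of $\sum_{\bar{t}}c_{\bar{t}}\vA(\bar{t})$ with subnormalization $\sum_{\bar{t}}c_{\bar{t}}$, itself a Riemann approximation of $\frac{1}{\sqrt{2\pi}}\int_{\BR}f(t)\,\rd t = 1$ and hence within $\CO(\epsilon)$ of $1$; folding this into the error budget gives a (subnormalization-$1$) $\epsilon$-approximate block-encoding of $\hat{\vA}_f$ that calls the controlled-evolution oracle for total evolution time $\CO(T)$ and the block-encoding of $\vA$ once. Rescaling $\epsilon$ by a constant yields the two stated bounds.

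\emph{Main obstacle.} Nothing here is deep; the real labor is the truncation tail. For the Gaussian this is a one-line estimate, but for the bump filter one must integrate and then invert the $e^{-\Theta(\Delta t/\ln(\Delta t)^2)}$ decay of \autoref{lem:compact_bump} --- the $\ln(\Delta t)^2$ in the exponent, the polynomial prefactor in $t$, and the tail integral each feeding the $\log\log(1/\epsilon)$ factors of $T$, whose exact power we do not attempt to optimize. The Riemann-sum estimate and the LCU normalization bookkeeping are routine.
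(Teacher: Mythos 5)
Your proposal is correct and takes essentially the same route as the paper: truncate the time integral using the Gaussian/bump tail decay, discretize by a Riemann sum, and compile via LCU with the circuit of \autoref{fig:A_LCU}, noting that the grid spacing affects only the time-register size and gate count rather than the Hamiltonian evolution time. The only cosmetic difference is that you bound the Riemann-sum derivative term by $2\|\vH\|$ whereas the paper's cited lemma uses the tighter $\|[\vA,\vH]\|$; this changes only the ancilla count, not the stated evolution-time bounds.
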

\begin{proof}
Truncate the time integral and invoke a Riemann sum
\begin{equation}
    \int_{-\infty}^{\infty} \vA(t)f(t)\rd t = \int_{-T}^{T} \vA(t)f(t)\rd t+ \int_{\labs{t}> T} \vA(t)f(t)\rd t \approx  \sum_{\bar{t} \in S_{t_0}} \vA(\bar{t})f(\bar{t}) t_0.
\end{equation}

The Riemann sum error is bounded by (e.g., ~\cite[Lemma III.1]{chen2023efficient})
\begin{align}
\lnorm{ \sum_{\bar{t} \in S_{t_0}} \vA(\bar{t})f(\bar{t}) t_0 - \int_{-T}^{T} \vA(t)f(t)\rd t } \le \frac{2T^2}{\labs{S_{t_0}}} \L(\norm{[\vA,\vH]}\sup_t\labs{f(t)} + \sup_t\labs{f(t)'}\R).    
\end{align}
In the Gaussian case, $\sup_t\labs{f(t)} \le \sigma$, $\sup_t\labs{f(t)'}\le \CO(\sigma^2),$ and $\int_{\labs{t}> T} \labs{f(t)}\rd t \le \CO( e^{-\sigma^2T^2/2}/\sigma T )$. Optimize $T$ at
\begin{equation}
    T = \CO\L(\frac{\sqrt{\log(1/\epsilon)}}{\sigma}\R) \ , \quad |S_{t_0}| = \CO\L(\frac{2T^2\sigma^2}{\epsilon} (1+\norm{[\vA,\vH]}/\sigma)\R) 
\end{equation}
to obtain the advertised error bound.
For the bump function, $\sup_t\labs{f(t)} \le \CO(\Delta),\sup_t\labs{f(t)'}= \CO( \Delta^2)$ and $\int_{\labs{t}> T} \labs{f(t)}\rd t = \CO( \Delta T e^{-\frac{-2\Delta T}{7\ln(\Delta T)^2}}).$
\begin{equation}
    T = \CO\L(\frac{\log(1/\epsilon) \log\log(1/\epsilon)^3}{\Delta}\R) \ , \quad |S_{t_0}| = \CO\L(\frac{2T^2\Delta^2}{\epsilon} (1+\norm{[\vA,\vH]}/\Delta)\R) 
\end{equation}

The algorithm for the Riemann sum $ \sum_{\bar{t} \in S_{t_0}} \vA(\bar{t})f(\bar{t}) t_0$ is a direct LCU together with controlled Heisenberg evolution (\autoref{fig:A_LCU}), with state preperation circuit for $\sum_{\bar{t} \in S_{t_0}} \sqrt{f(\bar{t})t_0} \ket{\bar{t}}$ (\cite{McArdle_2022quantumstate,chen2023efficient}). 
\end{proof}

\begin{proof}[Proof of~\autoref{alg:main}]
    Perform high-confidence phase estimation (e.g.,~\cite[Section 13]{dalzell2023quantum} ) on $\hat{\vA}_f$ on $\ket{\psi_0}$ to obtain $\epsilon$-precision with $\delta/2$ failure probability, using $\CO( \log(1/\delta)/\epsilon) $-queries to $\hat{\vA}_f.$ For the Gaussian filter, we choose $\sigma = \Delta /\sqrt{\log(1/\delta')}$, and $T= \log(1/\delta')/\Delta$ for achieving $\delta'$ approximate block-encoding for $\hat{\vA}_f.$ Choose $\delta' = \CO(\delta\epsilon / \log(1/\delta))$ such that $\delta' \log(1/\delta)/\epsilon \le \delta$, to conclude the proof that $T=\CO( \log(1/\epsilon\delta')/\Delta )$ suffices for a good approximation for $\hat{\vA}_f$. The case of bump function is essentially the same scaling, up to $\poly \log\log$ factors. The resulting state can be slightly entangled with the ancilla due to the $\delta/2$-strength leakage, but tracing out the ancilla yields a state $\delta$-close in trace distance.
\end{proof}

\section{Locality analysis} \label{app:locality}

In this section, we analyze the locality of our measurement protocol to establish \autoref{alg:local}. For the lattice Hamiltonian $\vH$ and an operator $\vA$, define $\vH_r$ to be the local Hamiltonian involving only the terms in $\vH$ within radius $r$ of the support of $\vA$. Denote $\vA_{\vH}(t) = e^{i\vH t} \vA e^{-i\vH t}$ to be the usual Heisenberg evolution, and $\vA_{\vH_r}(t) = e^{i\vH_r t} \vA e^{-i\vH_r t}$ the Heisenberg evolution under the spatially-truncated Hamiltonian $\vH_r$. We recall the Lieb-Robinson bounds (e.g.,~\cite[Lemma 5]{haah2020quantum}, \cite{chen2023speed}).
\begin{lem}[Lieb-Robinson bounds]\label{lem:LRbounds}
    Let $\vH$ be a local Hamiltonian~\eqref{eq:localHam} and $\vA$ is a local observable with $\norm{\vA} \leq 1$ supported on $A\subset \Lambda$. Then, there is a constant $c$ depending only on the geometric parameters $R,q,D$ such that
    \begin{align}
        \lnorm{\vA_{\vH_r}(t) - \vA_{\vH}(t)} \le \labs{A}\frac{(ct)^{r}}{r!}.
    \end{align}
    where $A$ is the support of $\vA$.
\end{lem}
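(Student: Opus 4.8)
The statement is the standard Lieb--Robinson light-cone bound specialised to spatial truncation of the generator, so the plan is to follow the textbook route: reduce the operator-norm error to a nested-commutator estimate, and exploit that --- since every interaction term has diameter below $R$ --- a ``signal'' must chain through $\Omega(r)$ terms to connect the support $A$ to the part of $\vH$ that was discarded.

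First I would record the Duhamel identity. Differentiating $s\mapsto \e^{\ri\vH(t-s)}\vA_{\vH_r}(s)\e^{-\ri\vH(t-s)}$, which interpolates between $\vA_{\vH}(t)$ at $s=0$ and $\vA_{\vH_r}(t)$ at $s=t$, gives
\[
  \vA_{\vH_r}(t)-\vA_{\vH}(t)=\ri\int_0^t \e^{\ri\vH(t-s)}\,[\vV_r,\vA_{\vH_r}(s)]\,\e^{-\ri\vH(t-s)}\,\rd s,\qquad \vV_r:=\vH-\vH_r,
\]
so that $\lnorm{\vA_{\vH_r}(t)-\vA_{\vH}(t)}\le\int_0^t\lnorm{[\vV_r,\vA_{\vH_r}(s)]}\,\rd s$, where $\vV_r$ is a sum of bounded terms $\vh_S$ each supported at distance greater than $r$ from $A$. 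A single pass does not suffice --- terms of $\vV_r$ straddling the truncation boundary need not commute with $\vA_{\vH_r}(s)$ --- so I would iterate. Equivalently, and more transparently, I would expand both evolutions in their common Taylor/Dyson series $\vA_{\vH}(t)=\sum_{k\ge0}\frac{(\ri t)^k}{k!}\,\mathrm{ad}_{\vH}^{k}(\vA)$ (and likewise for $\vH_r$) and note that $\mathrm{ad}_{\vH}^{k}(\vA)=\sum_{S_1,\dots,S_k}\mathrm{ad}_{\vh_{S_k}}\cdots\mathrm{ad}_{\vh_{S_1}}(\vA)$ is supported on \emph{connected} clusters of interaction terms rooted at $A$; for $k$ smaller than roughly $r/R$ every such cluster stays inside the radius-$r$ ball, where $\vH$ and $\vH_r$ carry identical terms, so the two series agree term by term and only the tail $k\gtrsim r/R$ contributes to $\vA_{\vH}(t)-\vA_{\vH_r}(t)$. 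The $k$-fold integral $\int_0^t\!\int_0^{s_1}\!\cdots$ --- equivalently the $1/k!$ in the series --- supplies the factorial.

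It then remains to bound $\lnorm{\mathrm{ad}_{\vH}^{k}(\vA)}\le\labs{A}\,c_0^{\,k}$ with $c_0$ depending only on $q,R,D$, via a branching estimate on the number of connected clusters of $k$ interaction terms rooted in $A$ (at most $\labs{A}$ choices of root, boundedly many ways to extend at each step, and a factor $2$ with $\norm{\vh_S}\le1$ from each commutator), and then to sum $\sum_{k\gtrsim r/R}(c_0t)^k/k!$ and fold the resulting $e^{c_0 t}$ together with the $R$-dependence into a single constant $c$, recovering $\labs{A}(ct)^r/r!$ in the lemma's normalization. I expect this combinatorial step --- obtaining a clean cluster count whose constant genuinely depends only on $R,q,D$ and not on the system size $n$ --- to be the sole real obstacle; the Duhamel/Dyson manipulations above are mechanical. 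Since the bound is entirely standard in this form, in the write-up I would invoke it directly from \cite{haah2020quantum,chen2023speed} and include only the sketch given here.
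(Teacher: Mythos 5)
The paper states this as a standard result, citing \cite{haah2020quantum} and \cite{chen2023speed} without reproducing a proof, and you end up doing the same in your last sentence, so at the surface level the treatments agree. The supplementary sketch you give in between, however, has a genuine gap in the step $\lnorm{\mathrm{ad}_{\vH}^{k}(\vA)}\le\labs{A}\,c_0^{\,k}$. You argue this by counting connected chains with ``boundedly many ways to extend at each step,'' but after $i$ commutators the cumulative support has size $\Theta(\labs{A}+iq)$, so the number of interaction terms that can be appended at step $i+1$ grows linearly in $i$, not $O(1)$. The honest chain count is $\prod_{i=0}^{k-1}\bigl(\labs{A}+iq\bigr)\kappa \sim (\kappa q)^{k}\,\Gamma\bigl(k+\labs{A}/q\bigr)$, which carries a $k!$-type factor. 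This cancels the $1/k!$ from the Taylor series, leaving a tail of the form $\sum_{k\gtrsim r/R}(c t)^{k}$ that does not decay (and diverges for $ct\ge 1$). Thus the brute-force nested-commutator expansion only yields a Lieb--Robinson bound at constant evolution time, which is useless here since the protocol needs $T=\Theta(1/\Delta)$. The references you cite avoid this overcount either by iterating the Duhamel identity you wrote down and closing a Gr\"onwall-type inequality on the commutator norm itself (rather than on $\mathrm{ad}^{k}$), or by restricting the cluster sum to irreducible, self-avoiding chains. You correctly flag the cluster count as the crux, but you locate the difficulty in the $n$-dependence of the constant; the actual failure mode is the $k!$ versus $c_0^{k}$ distinction. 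Since both the paper and your write-up ultimately defer to the references, this does not affect the downstream results, but the sketch as stated would not close.
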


Now define
\begin{equation}
    \hat{\vA}_{f,r,T} := \frac{1}{\sqrt{2\pi}}\int_{-T}^{T} \vA_{\vH_r}(t) f(t) \rd t.
\end{equation}
The following lemma guarantees that the space-truncation forms a good approximation.
\begin{lem}[Space-truncation approximation] \label{lem:space_truncation_approx}
    In the setting of~\autoref{lem:LRbounds}, 
    \begin{equation}
        \|\hat{\vA}_{f,r,T} - \hat{\vA}_{f,T}\| \leq  \CO\L(\labs{A} \frac{(cT)^{r}}{r!}\R).
    \end{equation}
\end{lem}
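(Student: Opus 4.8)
The plan is to bound the difference
\begin{equation}
    \hat{\vA}_{f,r,T} - \hat{\vA}_{f,T} = \frac{1}{\sqrt{2\pi}}\int_{-T}^{T} \bigl(\vA_{\vH_r}(t) - \vA_{\vH}(t)\bigr) f(t)\, \rd t
\end{equation}
directly by moving the operator norm inside the integral via the triangle inequality. First I would write
\begin{equation}
    \bigl\| \hat{\vA}_{f,r,T} - \hat{\vA}_{f,T} \bigr\| \le \frac{1}{\sqrt{2\pi}} \int_{-T}^{T} \bigl\| \vA_{\vH_r}(t) - \vA_{\vH}(t) \bigr\| \, \labs{f(t)} \, \rd t,
\end{equation}
and then apply the Lieb-Robinson bound of~\autoref{lem:LRbounds} to each integrand, giving $\|\vA_{\vH_r}(t) - \vA_{\vH}(t)\| \le \labs{A}\, (c\labs{t})^{r}/r!$ for $t \in [-T,T]$.

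Next, since $|t| \le T$ throughout the region of integration, I would crudely upper bound $(c\labs{t})^r/r! \le (cT)^r/r!$, pulling this factor out of the integral. What remains is $\frac{1}{\sqrt{2\pi}}\int_{-T}^T \labs{f(t)}\,\rd t$, which is at most $\frac{1}{\sqrt{2\pi}}\int_{-\infty}^{\infty} \labs{f(t)}\,\rd t$; for both the Gaussian filter~\eqref{eq:gaussian} and the bump function~\eqref{eq:compact_f} this is a constant (equal to $1$ for the nonnegative bump function, and $O(1)$ for the Gaussian since $f \ge 0$ there too). Folding that constant, together with the constant $c$ and the $1/\sqrt{2\pi}$, into the $\CO(\cdot)$ yields exactly the claimed bound $\CO\bigl(\labs{A}\,(cT)^r/r!\bigr)$.

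There is essentially no obstacle here — the lemma is a routine consequence of~\autoref{lem:LRbounds} combined with integrability of $f$. The only mild subtlety is that~\autoref{lem:LRbounds} as stated bounds $\|\vA_{\vH_r}(t) - \vA_{\vH}(t)\|$ in terms of $t$ (presumably meaning $\labs{t}$, as the bound should be symmetric under $t \mapsto -t$ by unitarity), so one should read the $(ct)^r$ there as $(c\labs{t})^r$; this is harmless over the symmetric interval $[-T,T]$. I would also note in passing that the same argument shows the tail $\hat{\vA}_f - \hat{\vA}_{f,T}$ contributes only the time-truncation error already controlled in~\autoref{lem:implement_Af}, so that combining the two truncations (in space and in time) and optimizing $r$ against $T$ is what ultimately produces the patch radius in~\autoref{alg:local}.
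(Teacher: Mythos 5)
Your proposal is correct and takes essentially the same route as the paper's proof: push the operator norm inside the integral, apply the Lieb-Robinson bound of~\autoref{lem:LRbounds}, crudely bound $(c\labs{t})^r \le (cT)^r$, and absorb the remaining $\int_{-T}^{T} \labs{f(t)}\,\rd t = \CO(1)$ into the constant. Your note about reading $(ct)^r$ as $(c\labs{t})^r$ and the observation that $f\ge 0$ for both filters are small tidying touches the paper leaves implicit, but the argument is the same.
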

\begin{proof} 
    We write
    \begin{equation}
        \lnorm{\hat{\vA}_{f,r,T} - \hat{\vA}_f} = \lnorm{\frac{1}{\sqrt{2\pi}}\int_{-T}^{T} \big(\vA_{\vH_r}(t) - \vA_{\vH}(t)\big) f(t) dt} \le \frac{1}{\sqrt{2\pi}} \labs{A} \frac{(ct)^{r}}{r!} \int_{-T}^{T} f(t)\rd t = \CO\L(\labs{A} \frac{(cT)^{r}}{r! }\R)
    \end{equation}
    The inequality holds by \autoref{lem:space_truncation_approx}, and the final bound since $\int_{-T}^{T} f(t)\rd t = \CO(1)$.
\end{proof}

\begin{proof}[Proof of \autoref{alg:local}]
    Recall, \autoref{alg:main} uses $T = \CO(\log(1/\delta) + \log(1/\epsilon))/\Delta$ truncation time for $\hat{\vA}_f$ for a final $\delta$ leakage and $1-\delta$ confidence. By \autoref{lem:space_truncation_approx}, the extra error incurred by a space truncation of radius $r$ is $\CO(\labs{A}(cT/r)^r)$ for some constant $c$. In order to ensure $\labs{A}(cT/r)^r = \CO(\delta)$, it suffices to set $r = e cT + \log(\labs{A}/\delta)$.
\end{proof}

\end{document}